\tikzset{circle split part fill/.style  args={#1,#2}{%
 alias=tmp@name,
  postaction={
    insert path={
     \pgfextra{
     \pgfpointdiff{\pgfpointanchor{\pgf@node@name}{center}}%
                  {\pgfpointanchor{\pgf@node@name}{east}}%
     \pgfmathsetmacro\insiderad{\pgf@x}
      \fill[#1] (\pgf@node@name.base) ([xshift=-\pgflinewidth]\pgf@node@name.east) arc
                          (0:180:\insiderad-\pgflinewidth)--cycle;
      \fill[#2] (\pgf@node@name.base) ([xshift=\pgflinewidth]\pgf@node@name.west)  arc
                           (180:360:\insiderad-\pgflinewidth)--cycle;            
         }}}}}  
\newcommand{\one}{\mathbf{1}}
\newcommand{\ZZ}{\mathbb{Z}}
\newcommand{\RR}{\mathbb{R}}
\DeclareMathOperator{\outdegree}{outdeg}
\DeclareMathOperator{\supp}{supp}
\DeclareMathOperator{\rank}{rank}
\DeclareMathOperator{\script}{script}
\DeclareMathOperator{\dist}{dist}
\DeclareMathOperator{\dgon}{dgon}
\begin{document}

\title{Constructing Tree Decompositions of Graphs with Bounded Gonality%
\thanks{This research was initiated at the Sandpiles and Chip Firing Workshop, held November 25--26, 2019 at the Centre for Complex Systems Studies, Utrecht University.}}

\author{Hans L.~Bodlaender\inst{1}%
\and Josse van Dobben de Bruyn \inst{2}%
\thanks{Supported by NWO grant 613.009.127.}%
\and \\ Dion Gijswijt \inst{2} %
\and Harry Smit \inst{3}%
}

\institute{
Department of Information and Computing Sciences, Utrecht University, P.O.~Box 80.089, 3508 YB Utrecht, the Netherlands. \email{h.l.bodlaender@uu.nl}
\and
Delft Institute of Applied Mathematics, Delft University of Technology, the Netherlands.
\email{\{j.vandobbendebruyn,d.c.gijswijt\}@tudelft.nl}
\and
Department of Mathematics, Utrecht University, P.O.~Box 80.010, 3508 TA Utrecht, the Netherlands.
\email{h.j.smit@uu.nl}
}

\authorrunning{Bodlaender, van Dobben de Bruyn, Gijswijt, Smit}

\maketitle
\begin{abstract}
    In this paper, we give a constructive proof of the fact that the treewidth of a graph is at most its divisorial gonality. The proof gives a polynomial time algorithm to construct a tree decomposition of 
    width at most $k$, when an effective divisor of degree $k$ that reaches all vertices is given.
    We also give a similar result for two related notions: stable divisorial gonality and stable gonality.
\end{abstract}

\section{Introduction}
\label{section:introduction}
In this paper, we investigate the relation between well studied graph parameters: treewidth and divisorial gonality. In particular, we give a constructive proof that the treewidth of a graph is at most its divisorial gonality.

Treewidth is a graph parameter with a long history. Its first appearance was under the name of {\em dimension}, in 1972, by Bertele and Briochi~\cite{BerteleB72}. It was rediscovered several times since, under different names (see e.g.~\cite{Bodlaender98}). Robertson and Seymour introduced the notions of {\em treewidth} and
{\em tree decompositions} in their fundamental work on graph minors; these notions became the dominant terminology. 

The notion of divisorial gonality finds its origin in algebraic geometry. Baker and Norine~\cite{BakerN07} developed a divisor theory on graphs in analogy with divisor theory on curves, proving a Riemann--Roch theorem for graphs. The graph analog of gonality for curves was introduced by Baker~\cite{baker2008specialization}. To distinguish it from other notions of gonality (which we discuss briefly in Section~\ref{section:othernotions}), we denote the version we study by \emph{divisorial gonality}.
Divisorial gonality can be described in terms of a chip firing game. A placement of $k$ chips on the vertices of a graph (where vertices can have $0$ or more chips) is called an {\em effective divisor} of {\em degree $k$}. Under certain rules (see Section~\ref{section:preliminaries}), sets of vertices can {\em fire}, causing some of the chips to move to different vertices. The divisorial gonality of a graph is the minimum degree of an effective divisor such that for each vertex $v$, there is a firing sequence ending with a configuration with at least one chip at $v$.

A non-constructive proof that the treewidth is never larger than the divisorial gonality of a graph was
given by Van Dobben de Bruyn and Gijswijt \cite{JosseGijswijtTreewidth}. This proof was based on the characterization of
treewidth in terms of brambles, due to Seymour and Thomas~\cite{SeymourT93}. In this paper, we give
a constructive proof of the same fact. We formulate our proof in terms of a search game characterization
of treewidth, but with small modifications, we can also obtain a corresponding tree decomposition.
The proof also yields a polynomial time algorithm that, when given an effective divisor of degree $k$, constructs a search strategy with at most $k+1$ searchers and a tree decomposition of width at most $k$ of the input graph.

This paper is organized as follows. Some preliminaries are given in Section~\ref{section:preliminaries}. 
In Section~\ref{section:search}, we prove the main result with help of a characterization of treewidth in terms of a search game and discuss that we also can obtain a tree decomposition of width equal to the degree of a given effective divisor that reaches all vertices.
An example is given in Section~\ref{section:example}.
In Section~\ref{section:othernotions}, we give constructive proofs that bound the treewidth of a graph
in terms of two related other notions of gonality.

\section{Preliminaries}
\label{section:preliminaries}

\subsection{Graphs}
In this paper, all graphs are assumed to be finite. We allow multiple edges, but no loops. Let $G=(V,E)$ be a graph. For disjoint $U,W\subseteq V$ we denote by $E(U,W)$ the set of edges with one end in $U$ and one end in $W$, and use the shorthand $\delta(U)=E(U,V\setminus U)$. The \emph{degree} of a vertex $v\in V$ is $\deg(v)=|\delta(\{v\})|$, and given $v\in U\subseteq V$ we denote by $\outdegree_U(v)=|E(\{v\},V\setminus U)|$ the number of edges from $v$ to $V\setminus U$. By $N(U)$ we denote the set of vertices in $V\setminus U$ that have a neighbour in $U$. The \emph{Laplacian} of $G$ is the matrix $Q(G)\in \RR^{V\times V}$ given by 
$$
Q_{uv}=\begin{cases}
\deg(u)&\text{if $u=v$},\\
-|E(\{u\},\{v\})|&\text{otherwise}.
\end{cases}
$$

\subsection{Divisors and gonality}
Let $G=(V,E)$ be a connected graph with Laplacian matrix $Q=Q(G)$. A \emph{divisor} on $G$ is an integer vector $D\in \ZZ^V$. The \emph{degree} of $D$ is $\deg(D)=\sum_{v\in V}D(v)$. We say that a divisor $D$ is \emph{effective} if $D\geq 0$, i.e., $D(v)\geq 0$ for all $v\in V$. 

The divisorial gonality can be defined in a number of equivalent ways. Most intuitive is the definition in terms of a chip firing game. Before giving that definition, we first give the more formal definition,
which is needed in some of our proofs.

Two divisors $D$ and $D'$ are \emph{equivalent} (notation: $D\sim D'$) if $D'=D-Qx$ for some $x\in \ZZ^V$. Note that equivalent divisors have the same degree since $Q^T\one=0$. If $D$ and $D'$ are equivalent then, since the null space of $Q$ consists of all scalar multiples of $\one$, $D'=D-Qx$ has a unique solution $x\in \ZZ^V$ that is nonnegative and has $x_v=0$ for at least one vertex $v$. We denote this $x$ by $\script(D,D')$ and write $\dist(D,D')=\max\{x_v:v\in V\}$. Note that if $t=\dist(D,D')$, then $\script(D',D)=t\one-x$ and thus $\dist(D',D)=\dist(D,D')$. If $D,D',D''$ are pairwise equivalent, then we have the triangle inequality $\dist(D,D'')\leq \dist(D,D')+\dist(D',D'')$ as $\script(D,D'')=\script(D,D')+\script(D',D'')-c\one$ for some nonnegative integer $c$. 

Let $D$ be a divisor. If $D$ is equivalent to an effective divisor, then we define
\begin{alignat*}{1}
\rank(D)=
\max\{k\in \ZZ_{\geq 0}:\  &D-E \text{ is equivalent to an effective divisor}\\
&\text{for every effective divisor $E$ of degree at most $k$}\}.
\end{alignat*}
If $D$ is not equivalent to an effective divisor, we set $\rank(D)=-1$. The \emph{divisorial gonality} of a graph $G$ is defined as
$$
\dgon(G)=\min\{\deg(D) : \rank(D)\geq 1\}.
$$

In the remainder of the paper, we will only consider effective divisors. Given an effective divisor $D$, we can view $D$ as a chip configuration with $D(v)$ chips on vertex $v$.
If $U\subset V$ is such that $\outdegree_U(v)\leq D(v)$ for every $v\in U$ (i.e., each vertex has at least as many chips as it has edges to vertices outside $U$), then we say that \emph{$U$ can be fired}.
If this is the case, then \emph{firing $U$} means that every vertex in $U$ gives chips to each of its neighbours outside $U$, one chip for every edge connecting to that neighbour.
The resulting chip configuration is the divisor $D' = D - Q\one_U$.
The assumption $\outdegree_U(v) \leq D(v)$ guarantees that the number of chips on each vertex remains nonnegative, i.e., that $D'$ is effective.

If we can go from $D$ to $D'$ by sequentially firing a number of subsets, then clearly $D\sim D'$. The converse is also true (part (i) of the next lemma) as was shown in \cite[Lemma 1.3]{JosseGijswijtTreewidth}. 
\begin{lemma}\label{Lem:chain}
Let $D$ and $D'$ be equivalent effective divisors. 
\begin{itemize}
    \item[\textup{(i)}] There is a unique increasing chain $\emptyset\subsetneq U_1\subseteq U_2\subseteq\cdots\subseteq U_t\subsetneq V$ of subsets on which we can fire in sequence to obtain $D'$ from $D$. That is, setting $D_0=D$ and $D_i=D_{i-1}-Q\one_{U_i}$ for $i=1,\ldots, t$ we have $D_t=D'$ and $D_i$ is effective for all $i=0,\ldots, t$.
    \item[\textup{(ii)}] We have $t=\dist(D,D')\leq \deg(D)\cdot|V|$.
\end{itemize}
\end{lemma}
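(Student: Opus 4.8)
The plan is to build the chain explicitly from the script $x=\script(D,D')$ and then verify the three assertions (legality, uniqueness, length) in turn. Write $t=\dist(D,D')=\max_{v}x_v$ and define the level sets
$$U_i=\{v\in V:\ x_v\ge t-i+1\}\qquad(i=1,\dots,t).$$
These form an increasing chain, and since $x\ge 0$ has at least one zero coordinate we get $\emptyset\subsetneq U_1$ and $U_t=\supp(x)\subsetneq V$. The key bookkeeping identity is $\sum_{i=1}^{t}\one_{U_i}=x$ (each $v$ lies in exactly the top $x_v$ of these sets), so the telescoping sum gives $D_t=D-Q\sum_i\one_{U_i}=D-Qx=D'$, as required. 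This settles everything except effectiveness, uniqueness, and the length bound.

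The main work is showing that each $D_i$ is effective, i.e.\ that the sets can really be fired in order; I expect this to be the crux. I would first show that the top set $U_1=\{v:x_v=t\}$ can be fired from $D$: for $v\in U_1$ all differences $x_v-x_w$ are nonnegative and are at least $1$ whenever $w\notin U_1$, so
$$D(v)\ge (Qx)(v)=\sum_{w\neq v}|E(\{v\},\{w\})|\,(x_v-x_w)\ge\sum_{w\notin U_1}|E(\{v\},\{w\})|=\outdegree_{U_1}(v),$$
where the first inequality is just $D'(v)=D(v)-(Qx)(v)\ge 0$. Hence $D_1=D-Q\one_{U_1}$ is effective, and one checks that $\script(D_1,D')=x-\one_{U_1}$, a nonnegative vector with a zero coordinate and maximum $t-1$ whose level sets are exactly $U_2\subseteq\cdots\subseteq U_t$, with $U_1\subseteq U_2$. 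The claim then follows by induction on $t$ (base case $t=0$, where $D=D'$ and the chain is empty). Alternatively one can argue non-inductively by comparing $Qx^{(i)}$ with $Qx$, where $x^{(i)}=\sum_{j\le i}\one_{U_j}$ is the truncation $\max(x-(t-i)\one,0)$.

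For uniqueness, suppose $\emptyset\subsetneq W_1\subseteq\cdots\subseteq W_s\subsetneq V$ is any chain of this form that fires $D$ to $D'$. Then $Q\big(\sum_j\one_{W_j}\big)=Qx$, so $\sum_j\one_{W_j}=x+c\one$ for some $c\in\ZZ$, because $\ker Q=\langle\one\rangle$. Evaluating at a vertex outside $W_s$ (which exists as $W_s\subsetneq V$) forces $c\le 0$, while evaluating at a zero coordinate of $x$ forces $c\ge 0$; hence $c=0$ and $\sum_j\one_{W_j}=x$. Since a weakly increasing chain is recovered from its indicator sum via $v\in W_j\iff \big(\sum_{j'}\one_{W_{j'}}\big)(v)\ge s-j+1$, and the common sum is $x$ with maximum $t$, we conclude $s=t$ and $W_j=U_j$ for all $j$.

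Finally, $t=\dist(D,D')$ holds by construction, so it remains to bound $t$. Here I would use a monotone potential: let $r(v)$ be the rank of $x_v$ among the distinct values taken by $x$ (so $0\le r(v)\le |V|-1$, and $r$ is strictly monotone with $x$ along every edge), and set $\Phi_i=\sum_v D_i(v)\,r(v)$. Firing $U_i$ sends one chip across each edge of $\delta(U_i)$ from its higher-$r$ endpoint (the one in $U_i$) to its strictly lower-$r$ endpoint, so $\Phi_{i-1}-\Phi_i=\sum_{ab\in\delta(U_i)}(r(a)-r(b))\ge|\delta(U_i)|\ge 1$ by connectedness. Summing over the $t$ steps yields $t\le\Phi_0-\Phi_t\le\Phi_0\le \deg(D)\,(|V|-1)\le\deg(D)\cdot|V|$, using $\Phi_t\ge 0$ and $r\le|V|-1$. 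The only real subtlety is choosing $r$ to be simultaneously bounded by $|V|$ and strictly monotone in the direction of $x$ across every cut: the raw script $x$ is monotone but too large, which is exactly why the rank function is used instead.
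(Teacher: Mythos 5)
Your proof is correct, and it is built on the same spine as the paper's --- the level-set decomposition $U_i=\{v\in V: x_v\ge t+1-i\}$ of the script $x=\script(D,D')$, with $x=\sum_{i=1}^t\one_{U_i}$ --- but all three verifications take genuinely different routes, and each of yours is sound. For effectiveness, the paper avoids any Laplacian computation: it fixes a vertex $v$ and notes that, because the chain is nested, the sequence $D_0(v),\dots,D_t(v)$ is weakly increasing up to the first index with $v\in U_i$ and weakly decreasing afterwards, so nonnegativity at the two endpoints $D(v),D'(v)$ forces nonnegativity throughout, handling all $t$ steps at once. You instead peel off the top level set, checking that $U_1$ is fireable directly from $D'(v)=D(v)-(Qx)(v)\ge 0$ together with $(Qx)(v)=\sum_{w\ne v}|E(\{v\},\{w\})|(x_v-x_w)\ge\outdegree_{U_1}(v)$, and then induct on $t$ via $\script(D_1,D')=x-\one_{U_1}$; this is more computational but isolates the reason the top set can fire (the same observation that powers Dhar's algorithm later in the paper). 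Your uniqueness argument --- $\sum_j\one_{W_j}=x+c\one$ with $c\le 0$ read off outside $W_s$ and $c\ge 0$ at a zero of $x$, then recovering the chain from its indicator sum --- is spelled out where the paper leaves uniqueness implicit in the uniqueness of the script. For part (ii) the two proofs are truly different: the paper counts multiplicities (each set repeats at most $\deg(D)$ times in the chain, since every firing of $U$ pushes at least one chip out of $U$ across the nonempty cut $\delta(U)$, and a nested chain has at most $|V|$ distinct sets), whereas your potential $\Phi_i=\sum_v D_i(v)\,r(v)$, with $r$ the rank of $x_v$ among the distinct values of $x$, decreases by at least $|\delta(U_i)|\ge 1$ per step; your choice of $r$ --- strictly decreasing across each cut $\delta(U_i)$ yet bounded by $|V|-1$ --- is exactly the right fix for the circularity of weighting by $x$ itself, and it even yields the marginally sharper bound $t\le\deg(D)\,(|V|-1)$. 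One cosmetic slip: the zero coordinate of $x$ is what gives $U_t=\supp(x)\subsetneq V$, while $\emptyset\subsetneq U_1$ follows from the maximum of $x$ being attained (and $t\ge 1$); your attribution swaps these, but both facts are immediate and nothing downstream is affected.
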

\begin{proof}
Let $x=\script(D,D')$ and let $t=\dist(D,D')=\max\{x(v):v\in V\}$. We let $U_1\subseteq\cdots\subseteq U_t$ be the level set decomposition of $x$. That is, 
$$U_i=\{v\in V: x(v)\geq t+1-i\}\qquad (i=1,\ldots, t).
$$
So $x=\sum_{i=1}^t \one_{U_i}$. To conclude the proof of part (i), it suffices to show that the divisors $D_0,D_1,\ldots, D_t$ are indeed effective. By assumption, this is true for $D_0=D$ and $D_t=D'$. Consider any $v\in V$. If $v\not\in U_t$, then $0\leq D_0(v)\leq D_1(v)\leq\cdots\leq D_t(v)$ since chips can only be added to $v$ when firing a subset not containing $v$.  Otherwise, let $i$ be the smallest index for which $v\in U_i$. Then 
$$0\leq D_0(v)\leq D_1(v)\leq \cdots\leq  D_{i-1}(v)$$
and 
$$D_{i-1}(v)\geq D_i(v)\geq D_{i+1}(v)\geq\cdots\geq D_t(v)\geq 0.$$ 
Hence, $D_i(v)\geq 0$ for all $i=0,\ldots, t$. 

For part (ii), we note that a set $U$ can occur at most $\deg(D)$ times in the chain $U_1\subseteq\cdots\subseteq U_t$ since each time we fire the set $U$ at least one chip leaves $U$. It follows that $t\leq \deg(D)\cdot |V|$.
\qed
\end{proof}


We see that the divisorial gonality of a graph $G$ is the minimum number $k$ such that there is
a starting configuration (divisor) with $k$ chips, such that for each vertex $x\in V$ there
is a sequence of sets we can fire such that $x$ receives a chip. Lemma~\ref{Lem:chain} shows that we
even can require these sets to be increasing.

For a given vertex $q$, a divisor $D\geq 0$ is called \emph{$q$-reduced} if there is no nonempty set $U\subseteq V\setminus\{q\}$ such that $D-Q\one_U\geq 0$. 
\begin{lemma}[{\cite[Proposition 3.1]{BakerN07}}]
Let $D$ be an effective divisor and let $q$ be a vertex. There is a unique $q$-reduced divisor equivalent to $D$. 
\end{lemma}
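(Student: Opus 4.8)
The plan is to prove existence and uniqueness separately, handling existence algorithmically and deriving uniqueness from the level-set firing structure of Lemma~\ref{Lem:chain}. The starting observation for both parts is that the set $\mathcal{E}$ of effective divisors equivalent to $D$ is finite and nonempty: it contains $D$, and every $D'\in\mathcal E$ is nonnegative with fixed degree $\deg(D')=\deg(D)$, so there are only finitely many such chip configurations.

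For existence, I would repeatedly apply Dhar's burning algorithm started at $q$. Given a current divisor $D'\in\mathcal E$, mark $q$ as burnt and iteratively burn any vertex $v$ with $|E(\{v\},B)|>D'(v)$, where $B$ is the current burnt set; let $U=V\setminus B$ be the unburnt set at the end. Every $v\in U$ satisfies $\outdegree_U(v)=|E(\{v\},B)|\le D'(v)$, so $U$ is fireable and $D'-Q\one_U\ge 0$. A short argument (no vertex of a witnessing fireable set can be the first such vertex to burn) shows conversely that $U=\emptyset$ exactly when $D'$ is $q$-reduced. Thus, as long as $D'$ is not $q$-reduced, we obtain a nonempty $U\subseteq V\setminus\{q\}$, fire it, and continue. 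To see that this terminates, I would show that no divisor is visited twice: if a block of firing sets $U^{(i)},\dots,U^{(j-1)}$ returned a previously seen divisor $D^{(i)}=D^{(j)}$, their sum $z=\one_{U^{(i)}}+\cdots+\one_{U^{(j-1)}}$ would satisfy $Qz=0$, hence $z\in\ZZ\one$ since the null space of $Q$ is spanned by $\one$; but $q$ is burnt at every step, so $q\notin U^{(k)}$ and $z(q)=0$, forcing $z=0$ and contradicting that each $U^{(k)}$ is nonempty. As $\mathcal E$ is finite and no divisor repeats, the process halts, necessarily at a $q$-reduced divisor.

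For uniqueness, suppose $D_1,D_2\in\mathcal E$ are both $q$-reduced and $D_1\neq D_2$. Let $x=\script(D_2,D_1)$, so that $D_1=D_2-Qx$ with $x\ge 0$, $\min_v x(v)=0$, and $t:=\dist(D_2,D_1)=\max_v x(v)\ge 1$. By Lemma~\ref{Lem:chain}, firing the level sets $U_1\subseteq\cdots\subseteq U_t$ of $x$ in order transforms $D_2$ into $D_1$ through effective divisors; in particular $D_2-Q\one_{U_1}\ge 0$ for the top level set $U_1=\{v:x(v)=t\}\neq\emptyset$. If $x(q)<t$, then $q\notin U_1$, so $U_1\subseteq V\setminus\{q\}$ is a nonempty set whose firing keeps $D_2$ effective, contradicting that $D_2$ is $q$-reduced. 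Otherwise $x(q)=t$, and I would apply the same reasoning to $y=\script(D_1,D_2)=t\one-x$: its top level set $\{v:y(v)=t\}=\{v:x(v)=0\}$ is nonempty, avoids $q$ because $y(q)=0<t$, and firing it keeps $D_1$ effective, contradicting that $D_1$ is $q$-reduced. Since $0\le x(q)\le t$ with $t\ge 1$, one of the two cases always applies, so $D_1=D_2$.

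The routine parts are the correctness of the burning algorithm and the bookkeeping with $\script$ and the level sets, both resting on facts already in the excerpt. I expect the main obstacle to be the termination of the existence procedure: a naive potential such as a distance-to-$q$ weighting of the chips need not decrease when firing the unburnt set, so I avoid potentials altogether and instead combine finiteness of $\mathcal E$ with the kernel identity $Qz=0\Rightarrow z\in\ZZ\one$ and the fact $z(q)=0$ to rule out cycles.
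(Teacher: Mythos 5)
Your proposal is correct, but there is nothing in the paper to compare it against: the paper states this lemma with a citation to \cite{BakerN07} and supplies no proof of its own. Judged on its merits, your argument is sound and, pleasingly, self-contained relative to the paper's toolkit: existence uses only Dhar's algorithm (Lemma~\ref{lem:Dhar}), the finiteness of the set of effective divisors of fixed degree, and the fact that the null space of $Q$ is $\ZZ\one$ (which requires connectivity of $G$, assumed in the paper); uniqueness uses only the normalization of $\script$, the identity $\script(D',D)=t\one-\script(D,D')$, and the level-set firing of Lemma~\ref{Lem:chain}, applied once to $x=\script(D_2,D_1)$ and once to $t\one-x$ so that in either case the top level set avoids $q$ and witnesses a legal firing from a supposedly $q$-reduced divisor. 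Two points deserve emphasis. First, you were right to avoid the paper's own remark that $\dist(D,D_q)\leq\deg(D)\cdot|V|$ bounds the number of Dhar iterations: that remark, and Lemma~\ref{Lem:distdecrease} on which it rests, presuppose that $D_q$ exists, so invoking them for existence would be circular; your cycle-exclusion argument (a repeated divisor would force $Qz=0$ with $z(q)=0$, hence $z=0$) breaks the circularity cleanly, at the cost of only the crude bound $|\mathcal{E}|$ on the number of iterations rather than $\deg(D)\cdot|V|$. Second, your uniqueness argument is essentially the now-standard one (and close in spirit to Baker--Norine's original), but your phrasing via Lemma~\ref{Lem:chain} makes it a genuine corollary of results already proved in this paper, which is arguably tidier than importing the external citation.
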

Let $D$ be an effective divisor and let $D_q$ be the $q$-reduced divisor equivalent to $D$. Suppose that $D\neq D_q$. By Lemma~\ref{Lem:chain} we obtain $D_q$ from $D$ by firing on a chain of sets $U_1\subseteq\cdots\subseteq U_t$ and, conversely, we obtain $D$ from $D_q$ by firing on the complements of $U_t,\ldots, U_1$. Since $D_q$ is $q$-reduced, it follows that $q$ is in the complement of $U_t$, and hence $q\not\in U_1$. It follows that $x=\script(D,D_q)$ satisfies $x_q=0$ and $D_q(q)\geq D(q)$. In particular, a divisor $D$ has positive rank if and only if for every $q \in V$ the $q$-reduced divisor equivalent to $D$ has at least one chip on vertex $q$.

Given an effective divisor $D$ and a vertex $q$, Dhar's algorithm~\cite{dhar1990self} finds in polynomial time a nonempty subset $U\subseteq V\setminus\{q\}$ on which we can fire, or concludes that $D$ is $q$-reduced. 
\begin{center}
\begin{algorithm}[ht]
\SetAlgoNoLine
\SetKwInOut{KwIn}{Input}
\SetKwInOut{KwOut}{Output}
\KwIn{Divisor $D\geq 0$ on $G$ and vertex $q$.}
\KwOut{Nonempty subset $U\subseteq V(G)\setminus\{q\}$ s.t. $D-Q\one_U\geq 0$ or $U=\emptyset$ if none exists.}

$U \leftarrow V\setminus\{q\}$\;
\While{$\outdegree_U(v)>D(v)$ for some $v\in U$}
{$U\leftarrow U\setminus\{v\}$}
\KwRet{U}
\caption{Dhar's burning algorithm}
\label{alg:Dhar}
\end{algorithm}
\end{center}
\begin{lemma}\label{lem:Dhar}
Dhar's algorithm is correct, and the output is the unique inclusionwise maximal subset $U\subseteq V\setminus\{q\}$ that can be fired.
\end{lemma}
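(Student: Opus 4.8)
The plan is to track the set $U$ through the execution of the algorithm and to show that $U$ always contains every subset of $V\setminus\{q\}$ that can be fired. Since the algorithm only ever deletes vertices from the initial set $V\setminus\{q\}$, it is immediate that $q\notin U$ at all times and that $U\subseteq V\setminus\{q\}$ throughout; moreover each iteration of the while-loop strictly decreases $|U|$, so the procedure terminates after at most $|V|-1$ iterations.

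The heart of the argument is the loop invariant: \emph{at every stage, $W\subseteq U$ for every $W\subseteq V\setminus\{q\}$ that can be fired}. This holds trivially at initialization, when $U=V\setminus\{q\}$. For the inductive step, suppose the invariant holds and the algorithm removes a vertex $v$ because $\outdegree_U(v)>D(v)$; I claim that no fireable set contains $v$. Indeed, if $W\subseteq V\setminus\{q\}$ can be fired and $v\in W$, then by the invariant $W\subseteq U$, so $V\setminus U\subseteq V\setminus W$; hence every edge counted in $\outdegree_U(v)$ is also counted in $\outdegree_W(v)$, giving $\outdegree_W(v)\geq\outdegree_U(v)>D(v)$. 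This contradicts the firing condition $\outdegree_W(v)\leq D(v)$ for all $v\in W$. Thus $v$ lies in no fireable subset of $V\setminus\{q\}$, and deleting it preserves the invariant. The key mechanism here is the \emph{monotonicity of outdegree}: shrinking $U$ can only increase $\outdegree_U(v)$, since edges from $v$ leaving $U$ are never lost when $U$ gets smaller.

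It then remains to read off the conclusion from the terminal state. When the loop halts, its guard has failed, so $\outdegree_U(v)\leq D(v)$ for all $v\in U$. If $U=\emptyset$, the invariant forces every fireable subset of $V\setminus\{q\}$ to be empty, so no nonempty set can be fired and returning $\emptyset$ is correct. Otherwise the terminal $U$ is itself fireable, and by the invariant it contains \emph{every} fireable subset of $V\setminus\{q\}$; hence $U$ is not merely a maximal such set but the unique inclusionwise maximum one, which is exactly the claim.

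The step I expect to require the most care is the inductive preservation of the invariant, and in particular the monotonicity observation that makes it go through. Everything else — termination, the bookkeeping that $q\notin U$, and the case analysis at the terminal state — is routine once the invariant is established.
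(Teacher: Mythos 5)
Your proof is correct and follows essentially the same route as the paper: the invariant that $U$ contains every fireable subset $W\subseteq V\setminus\{q\}$, preserved via the monotonicity $\outdegree_U(v)\leq\outdegree_W(v)$ for $v\in W\subseteq U$, is exactly the paper's argument, merely phrased as an explicit loop invariant with a contradiction rather than the paper's direct observation that no vertex of $W$ is ever removed. Your additional remarks on termination and the $U=\emptyset$ case are fine but routine.
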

\begin{proof}
The set returned by \autoref{alg:Dhar} can be fired, as it satisfies the requirement $\outdegree_U(v)\leq D(v)$ for every $v\in U$. To complete the proof it therefore suffices to show that $U$ contains every subset $W\subseteq V\setminus\{q\}$ that can be fired. 

Let $W\subseteq V\setminus\{q\}$ be any such subset. At the start of the algorithm $U=V\setminus\{q\}$ contains $W$. While $U\supseteq W$ we have $\outdegree_U(v)\leq \outdegree_W(v)\leq D(v)$ for any $v\in W$, so the algorithm never removes a vertex $v\in W$ from $U$.
\qed
\end{proof}

Note: in particular, \autoref{lem:Dhar} shows that the output of \autoref{alg:Dhar} does not depend on the order in which vertices are selected for removal.

If throughout the algorithm we keep for every vertex $v$ the number $\outdegree_U(v)$ and a list of vertices for which $\outdegree_U(v)>D(v)$, then we need only $O(|E|)$ updates, and we can implement the algorithm to run in time $O(|E|)$.

\begin{lemma}\label{Lem:distdecrease}
Let $D$ be an effective divisor on the graph $G=(V,E)$, let $q\in V$, and let $D_q$ be the $q$-reduced divisor equivalent to $D$. Let $U$ be the set returned by Dhar's algorithm when applied to $D$ and $q$, and suppose that $U\neq \emptyset$. Let $D'=D-Q\one_U$. Then $\dist(D',D_q)=\dist(D,D_q)-1$. 
\end{lemma}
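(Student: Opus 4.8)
The plan is to compute $\script(D',D_q)$ explicitly and simply read off its maximum. Write $x=\script(D,D_q)$ and $t=\dist(D,D_q)=\max_v x_v$; recall from the discussion preceding the lemma that $x_q=0$, since $q$ lies outside every set of the firing chain from $D$ to $D_q$. Because $D'=D-Q\one_U$ and $D_q=D-Qx$, a one-line computation gives $D_q=D'-Q(x-\one_U)$. If I can show that $x-\one_U\ge 0$, then since $(x-\one_U)_q=x_q-0=0$ and $x-\one_U$ is an integer vector, uniqueness of the script forces $\script(D',D_q)=x-\one_U$, and the whole lemma reduces to evaluating $\max_v\bigl(x_v-(\one_U)_v\bigr)$.

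The heart of the matter is therefore the inequality $x\ge\one_U$, equivalently $U\subseteq\supp(x)$: every vertex fired on the way to $D_q$ must also lie in the maximal fireable set returned by Dhar's algorithm. I expect this to be the main obstacle, and I would prove it by contradiction using the $q$-reducedness of $D_q$. Suppose $S:=U\cap\{v:x_v=0\}$ is nonempty; note $S\subseteq V\setminus\{q\}$, as $q\notin U$. I would show that $S$ can be fired from $D_q$, contradicting that $D_q$ is $q$-reduced. To verify $\outdegree_S(v)\le D_q(v)$ for $v\in S$, I would combine two facts: first, $v\in U$ is fireable from $D$, so $\outdegree_U(v)\le D(v)$; second, from $D=D_q+Qx$ and $x_v=0$ one gets $D(v)\le D_q(v)-|E(\{v\},\supp(x))|$, because every neighbour $u$ with $x_u\ge 1$ contributes at least $|E(u,v)|$ to $-(Qx)_v$. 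Writing $V\setminus S=(V\setminus U)\cup\supp(x)$ and bounding $\outdegree_S(v)\le \outdegree_U(v)+|E(\{v\},\supp(x))|$, these two inequalities combine to give exactly $\outdegree_S(v)\le D_q(v)$, the desired contradiction.

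With $x-\one_U=\script(D',D_q)\ge 0$ in hand, it remains to compute its maximum. For $v\in U$ the entry is $x_v-1\le t-1$, and for $v\notin U$ it is $x_v$; since the top level set $\{v:x_v=t\}$ is fireable from $D$ (it is the first link of the chain of Lemma~\ref{Lem:chain}), it is contained in the maximal fireable set $U$, so no vertex outside $U$ attains the value $t$, giving $x_v\le t-1$ there as well. Thus $\max_v(x-\one_U)_v\le t-1$, and this value is attained at any vertex of the top level set, which is nonempty since $D\ne D_q$ (Dhar returns a nonempty $U$, so $D$ is not $q$-reduced). Hence $\dist(D',D_q)=t-1=\dist(D,D_q)-1$. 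As a sanity check on the lower bound, $\dist(D',D_q)\ge t-1$ also follows at once from the triangle inequality together with $\dist(D,D')=1$, since $\script(D,D')=\one_U$.
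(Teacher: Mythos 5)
Your proof is correct, and its skeleton matches the paper's: both arguments identify $\script(D',D_q)$ as $x-\one_U$ where $x=\script(D,D_q)$, and both use the fact that the top level set $W=\{v:x_v=t\}$ can be fired from $D$ (Lemma~\ref{Lem:chain}) and is therefore contained in Dhar's maximal set $U$ (Lemma~\ref{lem:Dhar}) to bound the maximum by $t-1$. Where you genuinely diverge is in how the identification $\script(D',D_q)=x-\one_U$ is established. You verify the definition of the script head-on, which forces you to prove $x\geq\one_U$, i.e., $U\subseteq\supp(x)$; your contradiction argument for this --- firing $S=U\cap\{v:x_v=0\}$ from $D_q$ by combining $\outdegree_U(v)\leq D(v)$ with $D(v)\leq D_q(v)-|E(\{v\},\supp(x))|$ --- is correct, and nothing like it appears in the paper. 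The paper sidesteps that work entirely: since $D'$ is effective and $D_q$ is $q$-reduced, the discussion preceding the lemma already gives $x'_q=0$ for $x'=\script(D',D_q)$; then $x$ and $x'+\one_U$ both solve $D_q=D-Qy$ and both vanish at $q$, so they coincide because the null space of $Q$ is spanned by $\one$ --- and nonnegativity of $x-\one_U$ then comes for free from $x'\geq 0$. The trade-off: the paper's identification is a one-liner, and it concludes with the triangle inequality (upper bound $t'\leq t-1$ plus $\dist(D,D')=1$ forces equality), whereas your route is heavier but buys two small extras --- it yields en passant the stronger structural fact $U\subseteq\supp(\script(D,D_q))$ (every vertex of Dhar's set fires at least once on the way to $D_q$), and it computes $\dist(D',D_q)=t-1$ exactly, with attainment at $W$, so the triangle inequality is relegated to the sanity check you mention rather than being load-bearing.
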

\begin{proof}
 Let $x=\script(D,D_q)$. Since $D_q$ is $q$-reduced, we have $x_q=0$. On the other hand, since $D\neq D_q$ (as we can fire on $U$), the number $t=\max\{x_v:v\in V\}$ is positive. Let $W=\{v\in V:x_v=t\}$. By \autoref{Lem:chain}, we can fire on $W$, so by \autoref{lem:Dhar} we have $W\subseteq U$. 
 
 Let $x'=\script(D',D_q)$ and let $t'=\max\{x'_v:v\in V\}$. As $D_q$ is $q$-reduced, we have $x'_q=0$. Since there is a unique nonnegative $y\in \ZZ^V$ with $y_q=0$ and $D_q=D-Qy$, and we have $D-Qx=D_q=(D-Q\one_U)-Qx'$, it follows that $x=x'+\one_U$. Since $U\supseteq W$, it follows that $x-\one_W\geq x'$, and hence $t-1\geq t'$.
We find that $\dist(D',D_q)\leq \dist(D,D_q)-1$. Since $\dist(D,D')=1$, equality follows by the triangle inequality.
\qed
\end{proof}
Since $\dist(D,D_q)\leq \deg(D)\cdot |V(G)|$, we can find a $q$-reduced divisor equivalent to $D$ using no more than  $\deg(D)\cdot |V|$ applications of Dhar's algorithm.

\subsection{Treewidth and tree decompositions}

The notions of treewidth and tree decomposition were introduced by Robertson and Seymour~\cite{RobertsonS2}
in their fundamental work on graph minors.

Let $G=(V,E)$ be a graph, let $T=(I,F)$ be a tree, and let $X_i\subseteq V$ be a set of vertices 
(called {\em bag})
associated to $i$ for every node $i\in I$. The pair $(T,(X_i)_{i\in I})$ is a \emph{tree decomposition} of $G$ if it satisfies the following conditions: 
\begin{enumerate}
    \item $\bigcup_{i\in I} X_i = V$;
    \item for all $e=vw\in E$, there is an $i\in I$ with $v,w\in X_i$;
    \item for all $v\in V$, the set of nodes $I_v = \{i\in I ~|~ v\in X_i\}$ is connected (it induces a subtree of $T$).
\end{enumerate}
The \emph{width} of the tree decomposition is $\max_{i\in I} |X_i|-1$. The \emph{treewidth} of a $G$ is the minimum width of a tree decomposition of $G$. Note that the treewidth of a multigraph is equal to the treewidth of the underlying simple graph.

There are several notions that are equivalent to treewidth. We will use a notion that is based on a
Cops and Robbers game, introduced by Seymour and Thomas~\cite{SeymourT93}. Here, a number of searchers need to catch a fugitive. Searchers can move from a vertex in the graph to a `helicopter', or from a helicopter to any vertex in the graph. Between moves of searchers, the fugitive can move with infinite speed in the graph, but can only use paths that do not contain or go to a vertex with a searcher. The fugitive is captured when a searcher moves to the vertex with the fugitive, and the fugitive has no other vertex without a searcher he can move to. The location of the fugitive is 
known to the searchers at all times. We say that $k$ searchers can capture a fugitive in a graph $G$, if there is a strategy for $k$ searchers on $G$ that guarantees that the fugitive is captured. In the initial configuration, the fugitive can choose a vertex, and all searchers are in a helicopter.
A search strategy is \emph{monotone} if it is never possible for the fugitive to move to a vertex that had been unreachable before. In particular, in a monotone search strategy, there is never a path without searchers from the location of the fugitive to a vertex previously occupied by a searcher.

\begin{theorem}[Seymour and Thomas~\cite{SeymourT93}]
Let $G$ be a graph and $k$ a positive integer. The following statements are equivalent.
\begin{enumerate}
    \item The treewidth of $G$ is at most $k$.
    \item $k+1$ searchers can capture a fugitive in $G$.
    \item $k+1$ searchers can capture a fugitive in $G$ with a monotone search strategy.
\end{enumerate}
\end{theorem}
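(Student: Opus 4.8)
The plan is to prove the three statements equivalent by establishing the cycle of implications (1) $\Rightarrow$ (3) $\Rightarrow$ (2) $\Rightarrow$ (1). The implication (3) $\Rightarrow$ (2) is immediate, since a monotone search strategy is in particular a search strategy, so all the work lies in (1) $\Rightarrow$ (3) and in (2) $\Rightarrow$ (1). Before starting I would isolate the key combinatorial object, a \emph{bramble}: a collection $\mathcal{B}$ of connected vertex sets that pairwise \emph{touch} (any two either intersect or are joined by an edge), whose \emph{order} is the least size of a set meeting every member of $\mathcal{B}$. Brambles are the natural certificates that the fugitive can escape, and they are what connects the search game back to treewidth.

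For (1) $\Rightarrow$ (3), given a tree decomposition $(T,(X_i))$ of width at most $k$, I would root $T$ and describe a monotone strategy for $k+1$ searchers that descends toward the fugitive. The invariant is that the searchers occupy some bag $X_i$ while the fugitive is confined to the vertices appearing strictly below $i$. To move one level deeper, note that the fugitive lives in a single connected component $C$ of $G - X_i$, and $C$ lies inside the subtree of exactly one child $j$; by the tree decomposition axioms the only vertices of $X_i$ adjacent to $C$ lie in $X_i \cap X_j$. Hence the searchers may first lift off $X_i \setminus X_j$ (the fugitive cannot reach the freed vertices, so the move stays monotone) and then occupy $X_j \setminus X_i$, with at most $|X_j| \le k+1$ searchers on the board at the peak. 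The confinement region shrinks strictly at each step, so the fugitive is cornered in a leaf bag and captured.

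For (2) $\Rightarrow$ (1) I would argue the contrapositive, assuming the treewidth of $G$ exceeds $k$. The first ingredient is an \emph{escape lemma}: if $G$ has a bramble $\mathcal{B}$ of order at least $k+2$, then $k+1$ searchers cannot capture the fugitive. The fugitive maintains the invariant that it sits in some member $B \in \mathcal{B}$ avoided by all searchers, which exists because $k+1$ searchers cannot hit every member of a bramble of order $k+2$. When the searchers announce their next positions $S'$ with $|S'|\le k+1$, choose $B' \in \mathcal{B}$ disjoint from $S'$; since $B$ and $B'$ touch, $B \cup B'$ is connected and avoids the vertices occupied both before and after the move, so it remains searcher-free throughout and gives the fugitive a path into $B'$, re-establishing the invariant. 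Thus the fugitive survives forever, contradicting (2).

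The second and hardest ingredient is the bramble--treewidth duality: if $G$ has no bramble of order $k+2$, then the treewidth of $G$ is at most $k$. Combined with the escape lemma this closes the cycle, since a winning strategy rules out a bramble of order $k+2$, which in turn forces treewidth at most $k$. I expect this duality to be the main obstacle, because unlike the other steps it is not a direct construction but a genuine min--max theorem. The way I would attack it is to build a tree decomposition recursively from \emph{well-linked} separations, using the submodularity of the border function $U \mapsto |N(U)|$ (equivalently a Menger-type uncrossing argument) to keep all separators small, and to show that whenever no width-$\le k$ decomposition can be completed, the obstructions assemble into a touching family of order $k+2$. Keeping the chosen separations laminar so that the recursion actually terminates in a tree is the delicate point; everything else reduces to the two explicit constructions above.
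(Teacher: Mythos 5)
The paper does not actually prove this theorem: it is imported verbatim from Seymour and Thomas~\cite{SeymourT93}, and the rest of the paper only uses it as a black box (the authors construct a monotone search strategy and then convert it directly into a tree decomposition via Lemma~\ref{lemma:search2td}). So there is no in-paper proof to compare against; measured against the literature, your outline is the standard and correct one. The cycle $(1)\Rightarrow(3)\Rightarrow(2)\Rightarrow(1)$ is exactly the right way to get monotonicity for free, since it avoids any direct ``monotonization'' of an arbitrary winning strategy; your descending-strategy argument for $(1)\Rightarrow(3)$ is sound (the key observation that $N(C)\cap X_i\subseteq X_i\cap X_j$ follows correctly from the connectivity axiom of tree decompositions), and your escape lemma is the correct bramble argument, including the crucial point that the evasion path through $B\cup B'$ only needs to avoid the vertices occupied \emph{throughout} the move, i.e.\ $S\cap S'$. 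Two caveats. First, in $(1)\Rightarrow(3)$ you need a small normalization to guarantee that the confinement region \emph{strictly} shrinks (e.g.\ contract tree edges with $X_j\subseteq X_i$ or induct on the subtree below $j$); as stated, a child bag equal to its parent would stall the strategy. Second, and more substantially, the bramble--treewidth duality that carries $(2)\Rightarrow(1)$ is only gestured at: you correctly identify it as the genuine min--max content and name the right tools (recursively constructed separations, submodularity of $U\mapsto|N(U)|$, laminarity for termination), but this is the bulk of the Seymour--Thomas proof and your proposal does not discharge it. As a proof \emph{plan} it is the right plan; as a proof it is complete except for that one step, which is precisely the step the original paper was written to establish.
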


\section{Construction of a search strategy}
\label{section:search}


In this section, we present a polynomial time algorithm that,
given an effective divisor $D$ of degree $k$ as input,
constructs a monotone search strategy with $k+1$ searchers to capture the fugitive.

We start by providing a way to encode monotone search strategies. Let $G$ be a graph. For $X\subseteq V(G)$, the vertex set of a component of $G-X$ is called an \emph{$X$-flap}. A \emph{position} is a pair $(X,R)$, where $X\subseteq V(G)$ and $R$ is a union\footnote{Here we deviate from the definition of position at stated in~\cite{SeymourT93} in that we allow $R$ to consist of zero $X$-flaps or more than one $X$-flap.} of $X$-flaps (we allow $R=\emptyset$). The set $X$ represents the vertices occupied by searchers, and the fugitive can move freely within some $X$-flap contained in $R$ (if $R=\emptyset$, then the fugitive has been captured). In a monotone search strategy, the fugitive will remain confined to $R$, so placing searchers on vertices other than $R$ is of no use. Therefore, it suffices to consider three types of moves for the searchers: (a) remove searchers that are not necessary to confine the fugitive to $R$; (b) add searchers to $R$; (c) if $R$ consists of more than one $X$-flap, restrict attention to the $X$-flap $R_i\subset R$ containing the fugitive. This leads us to the following definition.    
 
\begin{definition}\label{def:monotone_strategy}
Let $G$ be a graph and let $k$ be a positive integer. A \emph{monotone search strategy} (MSS) with $k$ searchers for $G$ is a directed tree $T=(\mathcal{P},F)$ where $\mathcal{P}$ is a set of positions with $|X|\leq k$ for every $(X,R)\in \mathcal{P}$, and the following hold:
\begin{itemize}
\item[\textup{(i)}] The root of $T$ is $(\emptyset,V)$.
\item[\textup{(ii)}] If $(X,R)$ is a leaf of $T$, then $R=\emptyset$.
\item[\textup{(iii)}] Let $(X,R)$ be a non-leaf of $T$. Then $R\neq\emptyset$ and there is a set $X'\subseteq X\cup R$ such that exactly one of the following applies: 
\begin{itemize}
\item[\textup{(a)}] $X'\subset X$ and position $(X',R)$ is the unique out-neighbour of $(X,R)$.
\item[\textup{(b)}] $X'\supset X$ and position $(X',R')$ is the unique out-neighbour of $(X,R)$, where $R'=R\setminus X'$.
\item[\textup{(c)}] $X'=X$ and the out-neighbours of $(X,R)$ are the positions\\ $(X,R_1),\ldots, (X,R_t)$ where $t\geq 2$ and $R_1,\ldots, R_t$ are the $X$-flaps contained in $R$.
\end{itemize}   
\end{itemize}
If condition \textup{(ii)} does not necessarily hold, we say that $T$ is a \emph{partial} MSS. Note that we do not consider the root node to be a leaf even if it has degree $1$.
\end{definition}
It is clear that if $T$ is an MSS for $k$ searchers then, as the name suggests, $k$ searchers can capture the fugitive, the fugitive can never reach a vertex that it could not reach before, and a searcher is never placed on a vertex from which a searcher was previously removed. 

\begin{lemma}\label{lem:StrategySize}
Let $G$ be a graph on $n$ vertices and let $T$ be a (partial) MSS with $k$ searchers for $G$. Then $T$ has no more than $n^2+1$ nodes.
\end{lemma}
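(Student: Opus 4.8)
The plan is to bound the number of nodes by observing that $T=(\mathcal{P},F)$ is a tree whose nodes are \emph{distinct} positions, so it suffices to bound $|\mathcal{P}|$. For a position $(X,R)$ I would track the \emph{cleared set} $C:=V\setminus(X\cup R)$, so that $C$, $X$, $R$ partition $V$ and the position is determined by the pair $(C,R)$. First I would record how $C$ and $R$ evolve along an edge, reading off \autoref{def:monotone_strategy}(iii): a type-(a) move keeps $R$ fixed and strictly enlarges $C$; a type-(b) move keeps $C$ fixed (because $X'\subseteq X\cup R$) and strictly shrinks $R$; a type-(c) move to a flap $R_i$ strictly shrinks $R$ and strictly enlarges $C$. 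Hence along any directed root-to-node path $C$ is non-decreasing and $R$ is non-increasing, and every vertex passes monotonically through the states $R\to X\to C$ (with $R\to C$ allowed), changing state at most twice.

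The key structural observation I would isolate next is that for each vertex $v$ the set $T_v^R:=\{p\in\mathcal{P}: v\in R_p\}$ is a directed path starting at the root. Indeed it contains the root $(\emptyset,V)$, and at each of its nodes it continues into a unique child: at a type-(a) or type-(b) node there is only one child, while at a type-(c) node $v$ stays in $R$ only in the single flap containing it. Consequently each vertex enters $X$ along at most one edge, and that edge is a type-(b) move adding $v$. Since every type-(b) node adds at least one vertex, summing the number of vertices added over all type-(b) nodes gives at most $n$, so there are at most $n$ type-(b) nodes.

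Next I would bound the leaves and the branchings. Two incomparable leaves have a common ancestor of out-degree at least $2$; by \autoref{def:monotone_strategy}(iii) this ancestor must be a type-(c) node, the two leaves lie below two different flaps, which are disjoint, and $R$ stays inside the chosen flap at every later step. Assigning to each leaf a vertex of its last nonempty region therefore yields pairwise disjoint choices, so there are at most $n$ leaves. Since type-(c) nodes are exactly the nodes of out-degree at least $2$, the standard identity $\sum_{p}(\operatorname{outdeg}(p)-1)=(\#\text{leaves})-1$, the sum being over internal nodes, gives at most $n-1$ type-(c) nodes.

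The remaining and hardest case is the type-(a) nodes, and this is where I expect the real work to lie. The difficulty is that, unlike $T_v^R$, the set $\{p:v\in X_p\}$ is a genuine subtree (it branches at every type-(c) node, since a type-(c) move leaves $X$ unchanged), so a single vertex may be removed from $X$ in many different branches. What is still true is that, for fixed $v$, the type-(a) nodes removing $v$ are pairwise incomparable (once $v$ enters $C$ it never leaves), which already yields an $O(n^2)$ bound by charging each removal to a distinct leaf below it. To reach the sharp constant $n^2+1$ I would instead set up a single global injection sending each non-root node to an ordered pair in $V\times V$ — the root accounting for the ``$+1$'' — built from the vertex whose state change witnesses the move together with a canonical second coordinate recording its position within the relevant monotone chain; verifying that simultaneous multi-vertex moves and the branching of the $X$-subtrees do not destroy injectivity is the main obstacle, and is what pins down the exact bound $|\mathcal{P}|\le n^2+1$.
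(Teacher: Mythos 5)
Your preparatory analysis is sound as far as it goes: the monotone evolution of $C=V\setminus(X\cup R)$ and $R$ along root-to-node paths is correct, the set $\{p:v\in R_p\}$ is indeed a directed path from the root (so type-(b) nodes number at most $n$), the leaf count of at most $n$ via disjoint flaps at the last common ancestor is fine, and the identity $\sum_{p\ \mathrm{internal}}(\outdegree(p)-1)=\#\text{leaves}-1$ does give at most $n-1$ type-(c) nodes. But the proof is genuinely incomplete where you yourself flag it: the type-(a) count. Your incomparability argument charges removals of a fixed vertex to distinct leaves, giving at most $n^2$ type-(a) nodes, and adding everything up yields a bound of roughly $n^2+3n-1$, which is \emph{not} the claimed $n^2+1$. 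The ``single global injection'' into $V\times V$ that is supposed to close this gap is never constructed --- you only describe what properties it should have and concede that verifying injectivity against multi-vertex moves and the branching of the subtrees $\{p:v\in X_p\}$ is ``the main obstacle''. That obstacle is real: unlike the $R$-state, the $X$-state of a vertex does not trace out a path, so there is no canonical second coordinate available from your per-vertex chains, and no injection is exhibited. As it stands you have proved $|\mathcal{P}|=O(n^2)$ (which would suffice for the polynomial-time claim in Section~\ref{section:search}, but not for the lemma as stated).

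The paper sidesteps all per-vertex bookkeeping with a three-line potential argument: set $f(X,R)=|R|\,(|X|+|R|)$. Then $f\geq 0$ on leaves; in cases (a) and (b) the unique child satisfies $f(X',R')\leq f(X,R)-1$ (in case (a) because $|X|$ drops and $R\neq\emptyset$; in case (b) because $|R|$ drops and $|X'|+|R'|\leq|X|+|R|$); and in case (c) one checks $f(X,R)\geq f(X,R_1)+\cdots+f(X,R_t)+t$ since the flaps $R_i$ partition a subset of $R$. By induction, $f(X,R)$ bounds the number of descendants of $(X,R)$, so the node count is at most $1+f(\emptyset,V)=n^2+1$. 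If you want to rescue your route, either weaken the statement to $O(n^2)$ (enough for the algorithmic application) or replace the hoped-for injection by such an amortized potential, which is exactly the device that turns your ``each move makes progress'' intuition into the sharp constant.
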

\begin{proof}
For any position $(X,R)$, define $f(X,R)=|R|(|X|+|R|)$. For any leaf node $(X,R)$ we have $f(X,R)\geq 0$. For any non-leaf node $(X,R)$, the value $f(X,R)$ is at least the sum of the values of its children plus the number of children. Indeed, in case (a) and (b) we have $f(X,R)\geq f(X',R')+1$, and in case (c) we have $f(X,R)\geq f(X,R_1)+\cdots+f(X,R_k)+k$ as can be easily verified. It follows that $f(X,R)$ is an upper bound on the number of descendants of $(X,R)$ in $T$. Since every non-root node is a descendant of the root, it follows that the total number of nodes is at most $1+f(\emptyset,V)=1+n^2$. 
\qed
\end{proof}

In the construction of an MSS we will use the following lemma.
\begin{lemma}\label{Lem:goodfiringset}
Let $R$ be an $X$-flap. Let $D$ be a positive rank effective divisor such that $X\subseteq \supp(D)$ and $R\cap\supp(D)=\emptyset$. Then we can find in polynomial time an effective divisor $D'\sim D$ such that $X\subseteq\supp(D')$, $R\cap\supp(D')=\emptyset$, and such that from $D'$ we can fire a subset $U$ with $U\cap R=\emptyset$ and $U\cap X\not=\emptyset$.
\end{lemma}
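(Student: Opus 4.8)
The plan is to fix a vertex $q\in R$ and repeatedly apply Dhar's algorithm (\autoref{alg:Dhar}) with source $q$, firing the set it returns, until that set meets $X$; the divisor reached at that moment will be the desired $D'$, and the final returned set will be the witness $U$.

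The key structural fact is that, since $R$ is the vertex set of a component of $G-X$, every edge leaving $R$ ends in $X$, i.e. $N(R)\subseteq X$. I will maintain as invariants that the current divisor $\tilde D$ is equivalent to $D$, that $X\subseteq\supp(\tilde D)$, and that $R\cap\supp(\tilde D)=\emptyset$; all three hold for $\tilde D=D$. Because every vertex of $R$ carries no chips, running Dhar from $q\in R$ burns all of $R$: any $v\in U\cap R$ would satisfy $\outdegree_U(v)\le\tilde D(v)=0$, forcing every neighbour of $v$ into $U$, and propagating through the connected set $R$ would put $q\in U$, a contradiction. Hence the returned set $U$ always satisfies $U\cap R=\emptyset$. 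Moreover $\tilde D$ is never $q$-reduced: since $D$ has positive rank, the $q$-reduced divisor $D_q$ equivalent to $D$ has $D_q(q)\ge 1$, whereas $\tilde D(q)=0$; so $\tilde D\neq D_q$ and Dhar returns a nonempty $U$.

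Now I distinguish two cases for the returned set $U$. If $U\cap X\neq\emptyset$, I stop and output $D'=\tilde D$ together with $U$: it is fireable by \autoref{lem:Dhar}, we have just checked $U\cap R=\emptyset$ and $U\cap X\neq\emptyset$, and $\tilde D$ meets the support requirements by the invariants. If instead $U\cap X=\emptyset$, then since $N(R)\subseteq X$ no vertex of $U$ is adjacent to $R$, so firing $U$ sends no chip into $R$ and $R\cap\supp$ stays empty; and since no vertex of $X$ lies in $U$, no vertex of $X$ loses chips (a vertex outside the fired set can only gain chips, as the corresponding off-diagonal Laplacian entries are $\le 0$), so $X\subseteq\supp$ is preserved. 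Thus the invariants survive, and I fire $U$ and iterate.

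It remains to argue termination, which I expect to be the main point. Each firing step is exactly of the type analysed in \autoref{Lem:distdecrease}, so it strictly decreases $\dist(\tilde D,D_q)$ by one; by \autoref{Lem:chain}(ii) this quantity is initially at most $\deg(D)\cdot|V|$, so after at most that many firings we would reach $\tilde D=D_q$. But $D_q(q)\ge 1$ contradicts $R\cap\supp(\tilde D)=\emptyset$. Hence within $\deg(D)\cdot|V|$ iterations some returned set $U$ must meet $X$, and the loop halts in the first case. Each iteration runs Dhar in time $O(|E|)$ and performs one firing, so the whole procedure is polynomial. (Note that positive rank forces $X\neq\emptyset$: otherwise $R=V$, and $R\cap\supp(D)=\emptyset$ would give $D=0$, which has rank $0$.) The one subtlety to check carefully is that firing a set disjoint from both $X$ and $R$ genuinely preserves both support conditions — this is precisely where $N(R)\subseteq X$ and the sign of the off-diagonal Laplacian entries are used.
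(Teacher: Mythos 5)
Your proposal is correct and follows essentially the same route as the paper's proof: pick $q\in R$, run Dhar's algorithm, stop if the returned set meets $X$, otherwise fire it and iterate, with termination in at most $\deg(D)\cdot|V|$ iterations via Lemma~\ref{Lem:chain} and Lemma~\ref{Lem:distdecrease}. You additionally spell out details the paper leaves implicit (that Dhar returns a nonempty set because $D_q(q)\geq 1$ while $\tilde D(q)=0$, and the invariant-preservation argument using $N(R)\subseteq X$), which is a faithful elaboration rather than a different approach.
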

\begin{proof}
Let $q\in R$. Let $U$ be the set found by Dhar's algorithm. Since $R$ is connected and $U$ does not contain $R$, it follows that $U\cap R=\emptyset$ (otherwise $\outdegree_U(r)\geq 1>D(r)$ for some $r\in U\cap R$). If $U\cap X$ is nonempty, we set $D'=D$ and we are done. Otherwise, we set $D\leftarrow D-\one_U$. Then $X\subseteq \supp(D)$, $R\cap\supp(D)=\emptyset$ and we iterate. We must finish in no more than  $\deg(D)\cdot|V|$ iterations by Lemma~\ref{Lem:chain} and Lemma~\ref{Lem:distdecrease}. Hence, we can find the required $D'$ and $U$ in time $|E(G)|\cdot |V(G)|\deg(D)$.
\qed
\end{proof}

\paragraph{Construction of a monotone search strategy}
Let $G$ be a connected graph and let $D$ be an effective divisor on $G$ of positive rank. Let $k=\deg(D)$. We will construct an MSS for $k+1$ searchers on $G$. We do this by keeping a partial MSS, starting with only the root node $(\emptyset, V)$ and an edge to the node $(X,V\setminus X)$, where $X=\supp(D)$. Then, we iteratively grow $T$ at the leaves $(X,R)$ with $R\neq \emptyset$ until $T$ is an MSS. At each step, we also keep, for every leaf $(X,R)$ of $T$, an effective divisor $D'\sim D$ such that $X\subseteq \supp(D')$ and $R\cap \supp(D')=\emptyset$. We now describe the iterative procedure. 

While $T$ has a leaf $(X,R)$ with $R\neq\emptyset$, let $D'$ be the divisor associated to $(X,R)$ and perform one of the following steps. 
\begin{itemize}
\item[I.] If $R$ consists of multiple $X$-flaps $R_1,\ldots, R_t$, then we add nodes\\ $(X,R_1),\ldots, (X,R_t)$ as children of $(X,R)$ and associate $D'$ to each. Iterate. 
\item[II.] If $X'=N(R)$ is a strict subset of $X$, then add the node $(X',R)$ as a child of $(X,R)$, associate $D'$ to this node and iterate.
\item[III.] The remaining case is that $N(R)=X$ and $R$ is a single $X$-flap. By Lemma~\ref{Lem:goodfiringset} we can find an effective divisor $D''\sim D'$ such that $X\subseteq \supp(D'')$, $R\cap \supp(D'')=\emptyset$ and from $D''$ we can fire on a set $U$ such that $U\cap R=\emptyset$ and $U\cap X\neq\emptyset$. We set $U\cap X=\{s_1,s_2,\ldots, s_t\}$. That we can fire on $U$ implies that 
\begin{equation}\label{eq:enoughchips}
D''(s_i)\geq |N(s_i)\cap R|\quad\text{for $i=1,\ldots, t$}.
\end{equation}
For $i=1,\ldots, t$ we define positions $(X_i,R_i)$ and $(X'_i,R_i)$ as follows:
$$
X_i=X'_{i-1}\cup (N(s_i)\cap R),\quad R_i=R\setminus X_i,\qquad\text{and}\qquad X'_i=X_i\setminus\{s_i\},
$$
where we set $X'_0=X$. Using (\ref{eq:enoughchips}) and the fact that $X'_0\subseteq \supp(D'')$, it is easy to check that $|X_i'|\leq k$ and $|X_i|\leq k+1$ for every $i$. Since every edge in $\delta(R)$ has at least one endpoint in every $X'_i$, it follows that indeed $R_i$ is a union of $X'_i$-flaps (and of $X_i$-flaps). We add the path $(X,R)\to (X_1,R_1)\to (X'_1,R_1)\to\cdots\to (X'_t,R_t)$ to $T$ (it may happen that $(X_i,R_i)=(X_{i-1}',R_{i-1})$ in which case we leave out one of the two). We associate $D''-Q\one_U$ to the leaf $(X'_t,R_t)$.
\end{itemize}

By Lemma~\ref{lem:StrategySize}, we are done in at most $|V(G)|^2$ steps. This completes the construction. 
By combining the construction described above with that of the lemma below,
we obtain Theorem~\ref{theorem:polyconstruct}. Note that so far only a non-constructive
proof that the divisorial
gonality of a graph is an upper bound for the treewidth was known~\cite{JosseGijswijtTreewidth}.

\begin{lemma}
Let $T'=(\mathcal{P},F)$ be a monotone search strategy for $k$ searchers in the connected graph $G$ and let $T$ be the undirected tree obtained by ignoring the orientation of edges in $T'$. Then $(T, \{X\}_{(X,R)\in \mathcal{P}})$ is a tree decomposition of $G$ of width at most $k-1$. 
\label{lemma:search2td}
\end{lemma}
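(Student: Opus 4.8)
The width bound is immediate: every bag is a set $X$ with $|X|\le k$, so $\max_{(X,R)\in\mathcal P}|X|-1\le k-1$. For the three tree-decomposition axioms the plan is to track, for each vertex $v$, how it migrates between the three regions determined by a node $(X,R)$: the \emph{guarded} set $X$, the \emph{free} set $R$, and the \emph{cleared} set $V\setminus(X\cup R)$. First I would record the invariant $X\cap R=\emptyset$ (true at the root and preserved by all three move types of Definition~\ref{def:monotone_strategy}), and then the \emph{monotonicity} of the strategy: along any directed root--leaf path a vertex can pass from $R$ to $X$ (a type~(b) step that adds it), from $X$ to cleared (a type~(a) step), or from $R$ to cleared (a type~(c) step that discards its flap), but never in the reverse direction. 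Hence on each path the nodes with $v\in R$, then those with $v\in X$, then those with $v$ cleared, occur as three contiguous, suitably ordered blocks.

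Next I would analyse the set $B_v=\{(X,R):v\in R\}$. Because at a type~(c) node the fugitive's flaps partition $R$, exactly one child keeps $v$ in $R$, while type~(a) and type~(b) nodes have a single child; combined with monotonicity this shows $B_v$ is a single directed path starting at the root. Its last node cannot be a leaf (it still has $v\in R\ne\emptyset$), and by inspecting the three move types it must be a type~(b) step that moves $v$ into $X$. This yields a \emph{unique entry node} $e_v$ at which $v$ first joins $X$, proving axiom~1 (every $v$ lies in the bag $X_{e_v}$). For axiom~3 I would then show that $A_v=\{(X,R):v\in X\}$ is exactly the connected subtree rooted at $e_v$: every node of $A_v$ is a descendant of $e_v$ (the only $B_v$-node with a child in $A_v$ is the terminal one), and on each downward path $v$ stays in $X$ until a single type~(a) deletion, after which it is cleared forever. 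Connectivity of $I_v=A_v$ follows.

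The main work, and the step I expect to be the real obstacle, is axiom~2: covering each edge $vw$. The key structural fact I would isolate is that for \emph{every} position $(X,R)$ one has $N(R)\subseteq X$, since $R$ is a union of components of $G-X$; consequently in a type~(a) step the deleted searchers, being absent from the child's $N(R)$, cannot be adjacent to $R$. In other words, \textbf{a searcher adjacent to the current free set can never be removed.} Applying this to the edge $vw$: at the parent of $e_v$ the vertex $v$ is free, so its neighbour $w$ lies in the same flap (hence in $R$) or already in $X$. Tracing the two paths $B_v,B_w$ I would argue that $e_v$ and $e_w$ are comparable in $T'$ and that, at the deeper of the two entry nodes, say $e_w$, the earlier-entering endpoint $v$ is still guarded: from $e_v$ down to $e_w$ the free set keeps containing $w$, which is adjacent to $v$, so $v$ cannot be deleted; since the entry step at $e_w$ is of type~(b) and only enlarges $X$, both $v$ and $w$ lie in $X_{e_w}$. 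This covers the edge.

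The delicate points to get right are therefore the claim that $B_v$ is a single path with a unique type~(b) exit, and the non-deletion argument underlying edge coverage; everything else (the width bound, vertex coverage, and connectivity) follows routinely once the three-phase picture and the invariant $N(R)\subseteq X$ are established.
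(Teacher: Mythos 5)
Your proposal is correct, and every structural fact it rests on checks out against Definition~\ref{def:monotone_strategy}: positions satisfy $X\cap R=\emptyset$ and $N(R)\subseteq X$ automatically (since $R$ is a union of components of $G-X$); the cleared set $V\setminus(X\cup R)$ only grows along directed edges in all three move types, so your three-phase picture is sound; $B_v$ is indeed a root-path whose final step must be of type (b) (a leaf has $R=\emptyset$, a type (a) or (c) step keeps $v$ free in exactly one child), giving the entry node $e_v$; and for an edge $vw$ the paths $B_v$ and $B_w$ cannot diverge at a type (c) node because adjacent free vertices lie in the same flap, which yields the comparability of $e_v$ and $e_w$ and, via your non-deletion observation, $v,w\in X_{e_w}$. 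Compared with the paper, your route is a direct, forward-tracing reformulation of what the paper does by contradiction with extremal choices: the paper's ``deepest node with $u,v\in R$'' is the last common node of your $B_u$ and $B_v$, its ``deepest node with $u\in X$, $v\in R$'' is your node $P_w$ just before $w$'s entry, and its case (iii)(a) argument (``$u\in X'$ since otherwise $R'$ is not a union of $X'$-flaps'') is exactly your principle that a searcher adjacent to the free set can never be removed. For the connectivity axiom the paper instead verifies the path criterion $X_1\cap X_3\subseteq X_2$ in two cases (descendant and last common ancestor), and it gets vertex coverage for free from the capture property, whereas you derive both from the $B_v$/$e_v$ machinery. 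The trade-off: your version costs more bookkeeping (the three-phase invariant and path structure must be established first) but buys an explicit, constructive description of where each vertex enters the bags and which bag covers each edge; the paper's version is shorter but leaves that trajectory structure implicit inside the maximality arguments. Both are complete proofs of the lemma.
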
   
\begin{proof}
It is clear that $V=\bigcup \{X:\exists (X,R)\in \mathcal{P}\}$ since a fugitive stationary at any given vertex can be captured. 

Let $v\in V$. We must show that the set of nodes $\{(X,R)\in \mathcal{P}:v\in X\}$ is a subtree of $T$. Equivalently, we must show that if node $(X_2,R_2)$ lies on a path from $(X_1,R_1)$ to $(X_3,R_3)$ in $T$, then $X_1\cap X_3\subseteq X_2$. It suffices to check this in two cases: the case that $(X_3,R_3)$ is a descendant of $(X_1,R_1)$ in $T'$, and the case that $(X_2,R_2)$ is the last common ancestor of $(X_1,R_1)$ and $(X_3,R_3)$. In the first case, it is easy to see that $X_3\subset X_2\cup R_2$ and $R_2\subseteq R_1$ hold. It follows that
$$
X_1\cap X_3\subseteq X_1\cap (X_2\cup R_2)\subseteq X_1\cap (X_2\cup R_1)\subseteq X_2
$$ 
since $X_1$ and $R_1$ are disjoint. In the second case, node $(X_2,R_2)$ has more than one out-neighbour, so its out-neighbours are positions $(X_2,R)$, where $R$ runs over the $X_2$-flaps contained in $R_2$. It follows that $X_1\subseteq X_2\cup R'$ and $X_3\subseteq X_2\cup R''$ for distinct $X_2$-flaps $R'$ and $R''$. Hence, $X_1\cap X_3\subset X_2$.  

To complete the proof, it suffices to show that for every edge $\{u,v\}\in E(G)$ there is some node $(X,R)$ of $T$ with $u,v\in X$. Suppose for contradiction that this is not the case for edge $\{u,v\}$. 

We first show that there is a node $(X,R)$ such that $u\in X$ and $v\in R$ (or vice versa). To this end, consider the nodes $(X,R)$ of $T$ with $u,v\in R$ (e.g. the root node), and take such a node that has maximum distance from the root. This node cannot be a leaf since $R\ni v$ is non-empty. Since $u$ and $v$ belong to the same $X$-flap, it follows by the maximality assumption that $(X,R)$ has a child $(X',R')$ with $u\in X'$ and $v\in R'$ (or vice versa). 

Now consider all nodes $(X,R)$ with $u\in X$ and $v\in R$ and take such a node for which the distance to the root is maximised. This node cannot be a leaf (since $R\ni v$ is non-empty). Consider a child $(X',R')$ of $(X,R)$. If we are in case (iii)(a) then $v\in R'$ and we must have $u\in X'$ since otherwise $R'$ is not a union of  $X'$-flaps as $\{u,v\}$ is an edge. This contradicts the maximality assumption. If we are in case (iii)(b), then $u\in X'$ and $v\in R'$ contradicting the maximality assumption. If we are in case (iii)(c), we may assume that $R'$ is the $X$-flap containing $v$ and again this contradicts the maximality assumption. 
\qed
\end{proof}

\begin{theorem}
There is a polynomial time algorithm that, when given a graph $G$ and an effective divisor of degree $k$, finds a tree decomposition of $G$ of width at most $k$.
\label{theorem:polyconstruct}
\end{theorem}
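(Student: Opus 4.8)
The plan is to assemble \autoref{theorem:polyconstruct} by chaining together the two constructions that have just been laid out: the iterative construction of a monotone search strategy from an effective divisor, and \autoref{lemma:search2td}, which converts any MSS into a tree decomposition of the same width (up to the standard $\pm 1$ shift). First I would invoke the construction paragraph to produce, from the given effective divisor $D$ of degree $k$ on the connected graph $G$, a monotone search strategy $T'$ for $k+1$ searchers. Here I must be slightly careful about hypotheses: the construction requires $D$ to have positive rank and $G$ to be connected. Connectedness of $G$ can be assumed without loss of generality, since a tree decomposition of $G$ can be obtained by taking a forest of tree decompositions of its components and joining them by arbitrary edges (which does not increase the width); and positive rank can be checked (and, if it fails, is irrelevant, since the divisor then does not reach all vertices) — alternatively, one reduces to the component/positive-rank case and handles the degenerate cases directly.

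Next I would feed this MSS $T'$ into \autoref{lemma:search2td}. That lemma guarantees that the underlying undirected tree $T$, decorated with the bags $\{X\}_{(X,R)\in\mathcal{P}}$, is a tree decomposition of $G$ of width at most $(k+1)-1=k$. This is exactly the bound claimed in the theorem, so the correctness half of the proof is immediate once the two pieces are connected.

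The remaining obligation is the running-time bound, and this is where I expect the only real bookkeeping to lie. I would argue polynomiality by counting the work in the construction. By \autoref{lem:StrategySize} the strategy has at most $n^2+1$ nodes, so the outer loop runs at most $O(n^2)$ times. Each iteration performs one of steps I, II, III; the dominant cost is step III, which calls \autoref{Lem:goodfiringset}, itself running in time $O(|E(G)|\cdot|V(G)|\cdot\deg(D))$ via repeated applications of Dhar's algorithm (each costing $O(|E|)$ by the remark following \autoref{lem:Dhar}). Multiplying the per-iteration cost by the number of iterations gives a bound polynomial in $|V(G)|$, $|E(G)|$, and $k=\deg(D)$. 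The conversion of \autoref{lemma:search2td} is just a relabelling of nodes and bags and so adds only linear overhead.

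The main obstacle, such as it is, is not mathematical depth but making the reductions clean: I must ensure that the positive-rank and connectedness assumptions needed by the construction are either genuinely available or can be dispensed with, and that the degree parameter $k$ entering the time bound is treated as part of the input size rather than as a constant. Once those hypotheses are in place, the proof is essentially a two-line composition, so I would keep the write-up short and devote the bulk of the prose to justifying the polynomial running time.
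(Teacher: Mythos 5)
Your proposal is correct and follows exactly the paper's route: the paper obtains Theorem~\ref{theorem:polyconstruct} precisely by combining the iterative MSS construction with Lemma~\ref{lemma:search2td}, with polynomiality following from the $O(|V|^2)$ node bound of Lemma~\ref{lem:StrategySize} and the $O(|E|\cdot|V|\cdot\deg(D))$ cost of Lemma~\ref{Lem:goodfiringset} per step. Your added care about connectedness and positive rank is a reasonable clarification rather than a deviation (the paper's abstract makes the positive-rank hypothesis explicit by requiring the divisor to reach all vertices).
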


\section{An example}
\label{section:example}
We apply the constructions of the previous section to a relatively small example. Let $G$ be the graph as in Figure~\ref{fig:example_graph_G}. Let $D$ be the divisor on $G$ that has value $3$ on vertex $a$ and value $0$ elsewhere. 

\begin{figure}[htb]
\begin{center}
    \includegraphics[scale=1.0]{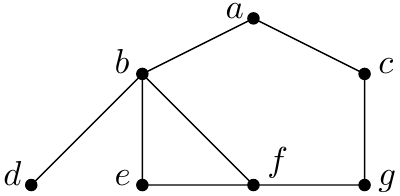} 
\end{center}

\caption{An example graph $G$. It has divisorial gonality equal to $3$.}
\label{fig:example_graph_G}
\end{figure}

If we follow the construction of Section~\ref{section:search}, we will end up with the monotone search strategy found in Figure~\ref{fig:monotone_search_on_G}. We start with the root node $(X,R)$ with $X=\emptyset$ and $R=V$ and connect it to the node $(\supp(D), V\setminus\supp(D))$. The three ways of growing the tree (steps I, II, III) are indicated in the picture. The four occurrences of step III are explained below.

\begin{figure}[htb]
\newcommand{\nodetekst}[2]{\rule{0pt}{0.8em}#1\nodepart{two}\rule{0pt}{0.8em}#2}

 \centering
 \begin{tikzpicture}
 \tikzset{mynode/.style={font=\small,inner sep=3pt, rounded corners,rectangle split, rectangle split parts=2,draw=black,thick, minimum height=1cm}}
 \node[font=\small,inner sep=3pt,rectangle split, rectangle split parts=2, minimum height=1cm, line width=0pt] at (-1,0) {\nodetekst{$X$}{$R$}};
 \node[mynode, very thick] (R) at (0,0) {\nodetekst{}{$abcdefg$}};
 \node[mynode] (R2) at (1.5,0) {\nodetekst{$a$}{$bcdefg$}};
 \node[mynode] (A) at (3.5,0) {\nodetekst{$abc$}{$defg$}};
 \node[mynode] (A2) at (4.6,0) {\nodetekst{$bc$}{$defg$}};
 \node[mynode] (B) at (6.3,0) {\nodetekst{$bc$}{$d$}};
 \node[mynode] (C) at (7.8,0) {\nodetekst{$b$}{$d$}};
 \node[mynode] (D) at (9.3,0) {\nodetekst{$bd$}{}};
 \node[mynode] (D2) at (10,0) {\nodetekst{$d$}{}};
 
 \node[mynode] (E) at (2,-2) {\nodetekst{$bc$}{$efg$}};
 \node[mynode] (F) at (3.7,-2) {\nodetekst{$bcg$}{$ef$}};
 \node[mynode] (F2) at (4.6,-2) {\nodetekst{$bg$}{$ef$}};
  
 \node[mynode] (G) at (6.4,-2) {\nodetekst{$befg$}{}};
 \node[mynode] (G2) at (7.5,-2) {\nodetekst{$efg$}{}};
 \node[mynode] (G3) at (8.5,-2) {\nodetekst{$ef$}{}};
   
 \draw[thick] (R)--(R2);
 \draw[thick] (A)--(A2);
 \draw[thick] (D)--(D2);
 \draw[thick] (F)--(F2);
 \draw[thick] (G)--(G2)--(G3);

 \draw[thick, ->, shorten >=3pt] (R2)--node [above] {\footnotesize III}node [below] {\footnotesize (1)}(A);
 \draw[thick, ->, shorten >=3pt] (A2)--node [above] {\footnotesize I}(B);
 \draw[thick, ->, shorten >=3pt] (B)--node [above] {\footnotesize II}(C);
 \draw[thick, ->, shorten >=3pt] (C)--node [above] {\footnotesize III}node [below] {\footnotesize (2)}(D);
 \draw[thick, ->, shorten >=3pt] (E)--node [above] {\footnotesize III}node [below] {\footnotesize (3)}(F);
 \draw[thick, ->, shorten >=3pt] (F2)--node [above] {\footnotesize III}node [below] {\footnotesize (4)}(G);

 \draw[thick, rounded corners, ->, shorten >=3pt] (A2)--node[right]{\footnotesize I}(4.6,-1)--(1,-1)--node[left]{\footnotesize I}(1,-2)--(E);
 \end{tikzpicture} 

\caption{The monotone search strategy obtained from $G$ with divisor $D=3a$. Each node shows the corresponding pair $(X,R)$ with the root being $(\emptyset, \{a,b,c,d,e,f,g\})$. The labels I--III refer to the steps in the construction.}
\label{fig:monotone_search_on_G}
\end{figure}

For compactness of notation, we write the divisors as a formal sum. For instance, if $D'$ has $2$ chips on $b$ and $1$ chip on $g$, we write $D'=2b+g$.
\begin{itemize}
\item[(1)] Divisor $D'$ is equal to $3a$. We fire the set $\{a\}$ and obtain the new divisor $a+b+c$.
\item[(2)] Divisor $D'$ is equal to $a+b+c$. We fire the set $\{a,b,c,e,f,g\}$ and obtain the new divisor $a+c+d$.  
\item[(3)] Divisor $D'$ is equal to $a+b+c$. We fire the set $\{a,c\}$ and obtain the new divisor $2b+g$.
\item[(4)] Divisor $D'$ is equal to $2b+g$. We fire the set $\{a,b,c,d,g\}$ and obtain the new divisor $e+2f$.
\end{itemize}

\section{Other notions of gonality}
\label{section:othernotions}

\subsection{Stable divisorial gonality}
The {\em stable divisorial gonality} of a graph $G$ is the minimum of $\dgon(H)$ over all subdivisions $H$ of $G$ (i.e., graphs $H$ that can be obtained by subdividing zero or more edges of $G$).
The bound for divisorial gonality can easily be transferred to one for stable divisorial gonality.
If $G$ is simple, then the treewidth of $G$ equals the treewidth of any of its subdivisions (this is well known).
If $G$ is not simple, then either the treewidth of $G$ equals the treewidth of all its subdivisions, or $G$ is obtained by
adding parallel edges to a forest (i.e., the treewidth of $G$ equals 1), and we subdivide at least one of these parallel edges (thus creating a graph with a cycle; the treewidth will be equal to 2 in this case.) In the latter case, the (stable) divisorial gonality will be at least two. Thus, we have the following easy
corollary.

\begin{corollary}
The treewidth of a graph $G$ is at most the stable divisorial gonality of $G$.
\end{corollary}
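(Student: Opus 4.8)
The plan is to combine the paper's main result with the elementary fact that subdividing edges can never lower treewidth. Write $s$ for the stable divisorial gonality of $G$, and let $H$ be a subdivision of $G$ attaining it, so that $\dgon(H)=s$. Since $H$ is a subdivision of the connected graph $G$ it is itself connected, and by definition of $\dgon$ it carries an effective divisor of degree $s$ and positive rank. First I would apply the main result (Theorem~\ref{theorem:polyconstruct}, whose existential consequence is $\text{tw}(H)\le\dgon(H)$) to this divisor on $H$, obtaining $\text{tw}(H)\le s$. It then remains only to establish $\text{tw}(G)\le\text{tw}(H)$, after which $\text{tw}(G)\le s$ is immediate.

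For the inequality $\text{tw}(G)\le\text{tw}(H)$ the cleanest route is to observe that $G$ is a \emph{minor} of any of its subdivisions: for each edge of $G$, contract the internal vertices of the path that replaced it back onto an endpoint. These paths are internally vertex-disjoint, since subdividing an edge introduces fresh vertices used by no other edge, so the contractions are compatible and recover $G$. Treewidth is minor-monotone, and as the treewidth of a multigraph equals that of its underlying simple graph, this monotonicity transfers to the present setting; hence $\text{tw}(G)\le\text{tw}(H)$. Equivalently, one runs the chain $\dgon(H')\ge\text{tw}(H')\ge\text{tw}(G)$ for \emph{every} subdivision $H'$ and takes the minimum over $H'$ to get $s\ge\text{tw}(G)$ directly.

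If one prefers to avoid invoking minor-monotonicity, the self-contained alternative is the case analysis sketched in the paragraph preceding the statement: when $G$ is simple, subdivision preserves treewidth exactly, so $\text{tw}(H')=\text{tw}(G)$ and $\dgon(H')\ge\text{tw}(G)$ for every subdivision $H'$; when $G$ is a multigraph, either the same treewidth-invariance holds, or $G$ is a forest with added parallel edges, so $\text{tw}(G)=1$, and here one checks separately that no subdivision is a tree, whence every subdivision has divisorial gonality at least $2$ (and $\ge 1$ already suffices). I expect the main obstacle to be precisely this step, $\text{tw}(G)\le\text{tw}(H)$, in the multigraph case: subdividing a parallel edge can turn a treewidth-$1$ graph into one containing a genuine cycle, so treewidth is \emph{not} literally invariant under subdivision, and the point is to confirm that subdivision can only raise, never lower, the treewidth relative to $G$. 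Minor-monotonicity settles this in one line, while the hand-crafted route must isolate the single exceptional family and verify the gonality bound there; everything else is routine bookkeeping.
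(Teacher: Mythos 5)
Your proof is correct, and it reaches the corollary by a slightly different route than the paper. Both arguments share the same skeleton: take a subdivision $H$ of $G$ with $\dgon(H)$ equal to the stable divisorial gonality, apply the main result (Theorem~\ref{theorem:polyconstruct}) to get that the treewidth of $H$ is at most $\dgon(H)$, and then reduce to showing that the treewidth of $G$ is at most that of $H$. The difference is in this last inequality. The paper proves it by a case analysis: for simple $G$, subdivision preserves treewidth exactly; for multigraphs, either the same invariance holds, or $G$ is a forest with added parallel edges (treewidth $1$) and the subdivision contains a cycle, where gonality at least $2$ closes the gap. You instead invoke minor-monotonicity of treewidth ($G$ is a minor of any of its subdivisions, via contracting the internally disjoint subdivision paths), which dispatches the inequality in one line, needs no exceptional family, and correctly transfers to multigraphs through the underlying simple graph. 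You also rightly flag the one subtle point, namely that treewidth is \emph{not} invariant under subdivision of parallel edges, so only monotonicity, not equality, is available there --- this is exactly the exceptional case the paper isolates. What the paper's hand-crafted route buys in exchange for its length is consistency with the paper's constructive theme: it comes with an explicit transformation (replace every subdivision vertex of an edge $\{v,w\}$ by $v$ in each bag) turning a tree decomposition of $H$ into one of $G$ of the same width, whereas minor-monotonicity as usually quoted is an existence statement (though it too can be made effective by the same bag-replacement). One small refinement you already note: in the exceptional case the treewidth of $G$ is $1$, so $\dgon(H)\geq 1$ already suffices, making the paper's bound of $2$ stronger than the corollary requires.
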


Standard treewidth techniques allow us to transform a tree decomposition of a subdivision of $G$ to a tree decomposition of $G$ of the same width. (For each subdivided edge $\{v,w\}$ replace each occurrence of a vertex representing a subdivision of this edge by $v$ in each bag.)

\subsection{Stable gonality}
Related to (stable) divisiorial gonality is the notion of \emph{stable gonality}; see \cite{CKK}.
This notion is defined using finite harmonic morphisms to trees.

Let $G$ and $H$ be undirected nonempty graphs. We allow $G$ and $H$ to have parallel edges but not loops. A graph homomorphism from $G$ to $H$ is a map $f:V(G)\cup E(G)\to V(H)\cup E(H)$ that maps vertices to vertices, edges to edges, and preserves incidences of vertices and edges: 
\begin{itemize}
\item $f(V(G))\subseteq V(H)$,
\item if $e$ is an edge between vertices $u$ and $v$, then $f(e)$ is an edge between $f(u)$ and $f(v)$.
\end{itemize}
A \emph{finite morphism} from $G$ to $H$ (notation: $f:G\to H$) is graph homomorphism $f$ from $G$ to $H$ together with an \emph{index function} $r_f: E(G) \to \ZZ_{>0}$.

A finite morphism $f:G\to H$ with index function $r_f$ is \emph{harmonic}, if for every vertex
$v\in V(G)$, there is a constant $m_f(v)$, such that for each edge $e \in E(H)$ incident to $f(v)$, we have
\[\sum_{e' \,\text{incident to } v; f(e')=e} r_f(e') = m_f(v)\]
If $H$ is connected and $|E(G)| \geq 1$, then there is a positive integer $\deg(f)$, the \emph{degree} of $f$, such that for all vertices $w\in V(H)$ and edges
$e\in E(H)$, we have
\[ \deg(f)= \sum_{v\in V(G); f(v)=w} m_f(v) = \sum_{e' \in E(G); f(e')=e} r_f(e'); \]
see~\cite[Lemma 2.12]{Urakawa2000} and \cite[Lemma 2.3]{baker2009harmonic}.
In particular, $f$ is surjective in this case.

A \emph{refinement} of a graph $G$ is a graph $G'$ that can be obtained from $G$ by zero or more of the following operations: subdivide an edge; add a leaf (i.e., add one new vertex and an edge from that vertex to an existing vertex).

The \emph{stable gonality} of a connected non-empty graph $G$ is the minimum degree of a finite harmonic morphism of a refinement of $G$ to a tree.

\begin{lemma}
Let $G$ be an undirected connected graph without loops and at least one edge. Given a tree $T$ and a finite harmonic morphism $f:G\to T$ of degree $k$, a tree decomposition of $G$ of width at most $k$ can be constructed in $O(k^2|V(G)|)$ time.
\label{lemma:morphismtodecomposition}
\end{lemma}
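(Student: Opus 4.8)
The plan is to construct the tree decomposition directly from the harmonic morphism $f : G \to T$, using the tree $T$ itself as the underlying tree of the decomposition. For each vertex $t \in V(T)$, the natural candidate for a bag is the fibre $f^{-1}(t) \subseteq V(G)$, i.e.\ the set of vertices of $G$ mapped to $t$. The degree condition tells us that $\sum_{v \in f^{-1}(t)} m_f(v) = k$, and since each $m_f(v) \geq 1$, this immediately bounds $|f^{-1}(t)| \leq k$, which would give width at most $k-1$. However, a bag consisting only of the fibre will not satisfy the edge condition: an edge $e \in E(G)$ with $f(e)$ an edge $st$ of $T$ has its two endpoints in different fibres $f^{-1}(s)$ and $f^{-1}(t)$, so no single fibre contains both. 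The fix is to enlarge the bags: I would set $X_t = f^{-1}(t) \cup \{\text{one endpoint of each edge } e \text{ with } f(e) \text{ incident to } t\}$, or more symmetrically, place each edge's contribution at one of its two endpoints in $T$. The cleanest choice is to root $T$, and for each non-root node $t$ with parent $s$, add to the bag $X_t$ the endpoints lying over $t$ together with the endpoints lying over $s$ of edges mapping to the edge $st$; this way every edge of $G$ is covered by the bag at the child end.

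First I would make the rooting precise: choose a root $\rho \in V(T)$, and for every $t \neq \rho$ let $s = \text{parent}(t)$ and define the bag to include $f^{-1}(t)$ together with the set $Y_t$ of endpoints in $f^{-1}(s)$ of edges $e$ with $f(e) = st$. The key size estimate is then $|X_t| \le |f^{-1}(t)| + |Y_t|$, and I would bound this using the harmonicity and degree relations. Second I would verify the three tree-decomposition axioms. The vertex-cover axiom holds because every $v \in V(G)$ lies in $f^{-1}(f(v))$, hence in $X_{f(v)}$ (surjectivity of $f$ guarantees each bag is reached). For the edge-cover axiom: given $e = uv \in E(G)$, its image $f(e)$ is an edge $st$ of $T$ (it cannot be a vertex since $G$ has no loops and $f$ preserves incidences), and by construction, if $t$ is the endpoint of $st$ farther from the root, then both $u$ and $v$ lie in $X_t$. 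For connectivity of $\{t : v \in X_t\}$: a fixed vertex $v \in f^{-1(s)}$ appears in $X_s$ and possibly in bags $X_t$ for children $t$ of $s$ reached by edges incident to $v$; since these $t$ are all adjacent to $s$ in $T$, the index set is a star centred at $s$, which is connected.

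The width bound is the place where the degree-$k$ hypothesis must be used carefully, and I expect it to be the main obstacle. The naive estimate $|X_t| \le |f^{-1}(t)| + |Y_t|$ can be as large as $2k$ if handled crudely, so the argument must exploit that the edges contributing to $Y_t$ and the vertices of $f^{-1}(t)$ compete for the same degree budget $k$. Concretely, each vertex $w \in f^{-1}(s)$ contributing to $Y_t$ sends $m_f(w)$ units of index along the edge $st$ by harmonicity, and the total over the edge is $\deg(f) = k$; simultaneously the vertices of $f^{-1}(t)$ each consume at least one unit of the degree counted at $t$. The clean way to combine these is to count, over the edge $st$, the quantity $\sum_{w \in Y_t} 1 \le \sum_{w \in Y_t} m_f(w) \le k$ on the parent side while $|f^{-1}(t)| \le k$ on the child side, and then argue that a vertex of $f^{-1}(t)$ incident to the edge $st$ is \emph{already} accounted for, so that the two contributions overlap rather than add. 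Making this overlap precise — showing that $|f^{-1}(t) \cup Y_t| \le k+1$ rather than $2k$ — is the crux, and it hinges on the equality $\sum_{v \in f^{-1}(t)} m_f(v) = k$ together with the fact that indices are strictly positive integers. Finally I would note the running time: the tree $T$ has $O(|V(G)|)$ nodes after accounting for the refinement, each bag has size $O(k)$, and each can be assembled by a single pass over the incident edges, giving the claimed $O(k^2 |V(G)|)$ bound.
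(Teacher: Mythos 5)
Your construction does not achieve width $k$, and the step you yourself flag as the crux --- the bound $|f^{-1}(t)\cup Y_t|\le k+1$ --- is in fact false. Note first that $f^{-1}(t)$ and $Y_t\subseteq f^{-1}(s)$ lie in disjoint fibres, so the hoped-for ``overlap'' between the two contributions cannot occur as a set intersection; the saving would have to come from proving $|f^{-1}(t)|+|Y_t|\le k+1$, and the degree budget does not force this. Concretely, let $k\ge 2$, let $T$ be the path $s,t,u$, and let $G$ have vertices $c_1,\ldots,c_k$ over $s$, vertices $a_1,\ldots,a_k$ over $t$, and a single vertex $x$ over $u$, with edges $a_ic_i$ and $a_ix$ for $i=1,\ldots,k$, all of index $1$. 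Then $m_f(c_i)=m_f(a_i)=1$ and $m_f(x)=k$, harmonicity holds at every vertex (each $a_i$ sends exactly one unit of index toward $s$ and one toward $u$), every vertex and edge fibre sums to $\deg(f)=k$, and $G$ is connected and loopless (a star centred at $x$ with a pendant $c_i$ attached to each leaf $a_i$). Rooting $T$ at $s$, your bag at $t$ is $\{a_1,\ldots,a_k\}\cup\{c_1,\ldots,c_k\}$, of size $2k$, so your decomposition has width $2k-1$; rooting at $u$ instead merely moves the size-$2k$ bag to the node $s$. Moreover, no choice of root repairs this within your scheme: each edge $a_ic_i$ must be covered at a node on the $s$-side or the $t$-side of the tree edge $st$, and by the connectivity axiom this forces $a_i\in X_s$ in the first case and $c_i\in X_t$ in the second, so if $|A|$ edges are covered on the $s$-side then $|X_s|\ge k+|A|$ and $|X_t|\ge k+(k-|A|)$, whence some bag has size at least $k+\lceil k/2\rceil$. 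Thus the failure is intrinsic to using $T$ unsubdivided with fibres inside their own bags, not an artifact of how the estimate is carried out.

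This is exactly the obstacle the paper's proof is built around, and it is why the paper subdivides $T$. For each tree edge $e=\{i,j\}$ with $\ell\le k$ preimage edges $e_1=\{v_1,w_1\},\ldots,e_\ell=\{v_\ell,w_\ell\}$ (where $f(v_s)=i$ and $f(w_s)=j$), it inserts $\ell$ subdivision nodes $i_1,\ldots,i_\ell$ and hands the fibre over one edge at a time, using bags $X_{i_r}=\{v_s \mid r\le s\le \ell\}\cup\{w_s \mid 1\le s\le r\}$ of size at most $\ell+1\le k+1$; edge $e_r$ is covered in $X_{i_r}$, and each vertex still traces out a connected subtree (its fibre node plus an initial segment of subdivision nodes along each incident tree edge). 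Your verifications of the other ingredients are sound --- the vertex-cover and edge-cover axioms, the star-shaped traces, and the bound $|f^{-1}(t)|\le k$ from $m_f(v)\ge 1$ all match the paper --- but the missing idea is precisely this gradual hand-over, which replaces your single transition bag of potential size $2\ell$ by $\ell$ bags of size at most $\ell+1$.
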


Before proving the lemma, we make some simple observations. Recall that indices $r_f(e)$ are positive integers. We thus have for each edge $e\in E(T)$:
\[ | \{e'\in E(G) \mid f(e')=e \} | \leq \sum_{e' \in E(G); f(e')=e} r_f(e') = \deg(f). \]

Since $G$ is connected and has at least one edge, it follows that $m_f(v)\geq 1$ for every $v\in V(G)$. Hence, for each vertex $i\in V(T)$: 
\[ | \{v \in V(G) \mid f(v)=i \}| \leq \sum_{v\in V(G); f(v)=i} m_f(v) = \deg(f). \]

\begin{proof}[of Lemma~\ref{lemma:morphismtodecomposition}]
We build a tree decomposition of $G$ in the following way. 
For each edge $e \in E(T)$, we have that $| \{ e'\in E(G) \mid f(e')=e \} | \leq k$. Call this
number $\ell(e)$. We subdivide $e$ precisely $\ell(e)$ times; that is, we add $\ell(e)$ new vertices on this edge. Let $T'$ be the tree that is obtained in this way.

To the nodes $i$ of $T'$, we associate sets $X_i$ in the following way. If $i$ is a node of $T$ (i.e., not a node resulting from the subdivisions), then $X_i = f^{-1}(i)$, i.e., all vertices mapped by
the morphism to $i$. By the observation above, we have that $|X_i|\leq \deg(f) = k$.

Consider an edge $\{i,j\}$ in $T$.  Write $k' = \ell(\{i,j\})$.
Recall that there are $k' \leq k$ edges of $G$ that are mapped to $\{i,j\}$. 
Suppose these are $e_1 = \{v_1,w_1\}, \ldots, e_{k'} = \{v_{k'},w_{k'}\}$
with $f(v_1)=f(v_2)=\cdots =f(v_{k'})= i$ and $f(w_1)=f(w_2)=\cdots =f(w_{k'})= j$.
Let $i_1, i_2, \ldots, i_{k'}$ be the subdivision nodes of the edge $\{i,j\}$, with $i_1$ incident to $i$
and $i_{k'}$ incident to $j$. 
Set $X_{i_r} = \{v_s \mid r\leq s \leq k'\} \cup \{w_t \mid 1 \leq t \leq r\}$ for $r \in \{1, \ldots, k'\}$.
The construction is illustrated in Figure~\ref{fig:gonalitytreewidth}. We claim that this gives a tree decomposition of $G$ of width at most $k$.

\begin{figure}[htb]
    \centering
    \includegraphics[width=\textwidth]{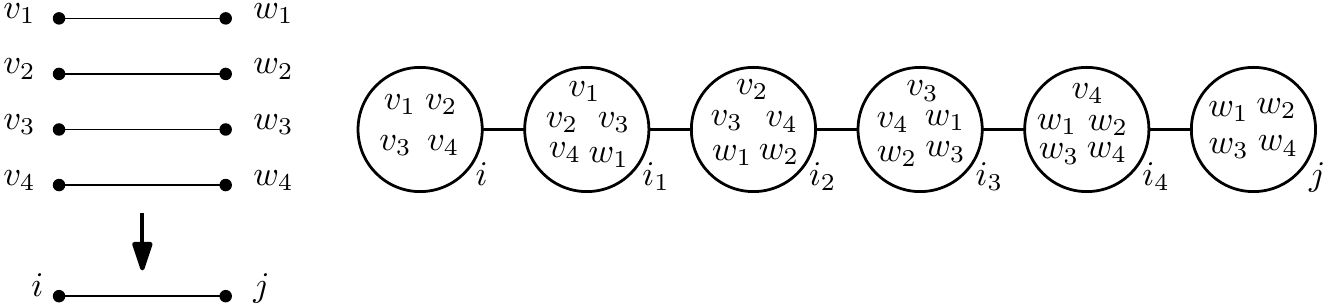}
    \caption{Example of a step in the proof of Lemma~\ref{lemma:morphismtodecomposition}. Here $k'=4$. Left: four edges are mapped to the edge $\{i,j\}$ by the finite harmonic morphism. Right: the corresponding bags in the tree decomposition.}
    \label{fig:gonalitytreewidth}
\end{figure}

 For all edges $\{v,w\}\in E(G)$, we have $\{f(v),f(w)\}\in E(T)$. Suppose without loss of generality that $f(v)$ has the role of $i$, $f(v)$ the role of $j$, $v=v_r$ and $w=w_r$ in the construction above. Then $v, w \in X_{i_r}$.

Finally, for all $v\in V$, the sets $X_i$ to which $v$ belongs are the following: $v$ is in $X_{f(v)}$,
and for each edge incident to $f(v)\in T$, $v$ is in zero or more successive bags of subdivision nodes
of this edge, with the first one (if existing), incident to $f(v)$. Thus, the bags to which $v$ belongs
form a connected subtree.

The first condition of tree decompositions follows from the second and the fact that $G$ is connected.
So, indeed $T'$ with bags as defined above gives a tree decomposition of $G$.

Finally, note that each set $X_i$ is of size at most $k+1$: vertices in $T$ have a bag of size $k$ and
subdivision vertices have a bag of size $k'+1 \leq k+1$. So, we have a tree decomposition of $G$
of width at most $k$.

It is straightforward to see that the construction in the proof can be carried out in $O(k^2|V(G)|)$ time. (Use that $|V(T)| \leq |V(G)|$, since $f$ is surjective.)
\qed
\end{proof}

\begin{theorem}
Let $G$ be an undirected connected graph without loops. Suppose that $G$ has stable gonality $k$. Then $G$ has treewidth at most $k$. Given a refinement $G'$ of $G$ and a finite harmonic morphism $f:G'\to T$ of degree $k$, a tree decomposition of $G$ of width at most $k$ can be constructed in $O(k^2|V(G')|)$ time.
\end{theorem}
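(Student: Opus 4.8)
The plan is to derive the theorem from Lemma~\ref{lemma:morphismtodecomposition} by first building a tree decomposition of the refinement $G'$ and then undoing the refinement operations one level at a time. By definition of stable gonality, a refinement $G'$ of $G$ together with a finite harmonic morphism $f\colon G'\to T$ of degree $k$ exists, and these are exactly the data handed to us; the bound on the treewidth of $G$ will follow automatically once we have produced a tree decomposition of $G$ of width at most $k$, so it suffices to describe the construction and check its correctness and running time.

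First I would verify that $G'$ meets the hypotheses of Lemma~\ref{lemma:morphismtodecomposition}. Subdividing edges and attaching leaves preserves connectivity and creates no loops, so $G'$ is connected and loopless; and, assuming $G$ has at least one edge (the single-vertex case has treewidth $0$ and is handled by the trivial one-bag decomposition), $G'$ also has an edge. Applying Lemma~\ref{lemma:morphismtodecomposition} to $f\colon G'\to T$ then yields, in $O(k^2|V(G')|)$ time, a tree decomposition $(T'',(X_i))$ of $G'$ of width at most $k$.

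The substantive part of the proof is transforming this into a tree decomposition of $G$ without increasing the width. I would classify the vertices of $G'$ into three kinds: the original vertices of $G$, the subdivision vertices lying on the paths that replace original edges, and the vertices introduced (directly or indirectly) by leaf-additions. The transformation then proceeds in two stages. In the first stage I delete every leaf-addition vertex from every bag; since deleting a fixed vertex from all bags of a tree decomposition always yields a valid tree decomposition of the graph with that vertex removed and never enlarges a bag, this produces a tree decomposition of a plain subdivision $\tilde G$ of $G$. In the second stage I apply the standard subdivision-reversal move already recalled in Section~\ref{section:othernotions}: for each original edge $\{u,v\}$ subdivided into a path $u=p_0-p_1-\cdots-p_m=v$, I replace every occurrence of each $p_j$ with $1\le j\le m-1$ by $u$. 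The union of the connected subtrees for $u,p_1,\ldots,p_{m-1}$ is connected because consecutive ones share the bag covering the corresponding path edge, so the subtree condition for $u$ is preserved; the bag that covered the edge $\{p_{m-1},v\}$ now contains both $u$ and $v$, so the edge $\{u,v\}$ of $G$ is covered; and relabelling can only keep bag sizes equal or shrink them. The result is a tree decomposition of $G$ of width at most $k$, and hence $G$ has treewidth at most $k$.

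The main obstacle I anticipate is one of careful bookkeeping rather than of idea: checking that the two reversal stages interact cleanly when subdivisions and leaf-additions are interleaved (for instance when an added pendant edge is itself subdivided), and confirming that the whole transformation stays within the claimed $O(k^2|V(G')|)$ budget. For the running time I would note that every bag has size at most $k+1$ and that $|V(T'')|=O(k|V(G')|)$: the tree $T''$ has $|V(T)|\le |V(G')|$ original nodes (as $f$ is surjective) plus at most $k$ subdivision nodes on each of the $|E(T)|\le |V(G')|$ edges of $T$. Hence the total size of all bags is $O(k^2|V(G')|)$, and the deletion and relabelling passes, being linear in that total size, fit the stated bound.
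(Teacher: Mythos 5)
Your proposal is correct and takes essentially the same route as the paper: apply Lemma~\ref{lemma:morphismtodecomposition} to $G'$, handle the edgeless case trivially, then undo the refinement by deleting leaf-added vertices from all bags and relabelling each subdivision vertex by an endpoint of the original edge---exactly the ``standard treewidth techniques'' the paper invokes in one line. Your additional bookkeeping (interleaved subdivisions and leaf-additions, the subtree/edge-coverage checks, and the $O(k^2|V(G')|)$ cost of the deletion and relabelling passes) merely makes explicit what the paper leaves implicit.
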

\begin{proof}
The degenerate case that $G$ has no edges must be handled separately; here we have that the treewidth of $G$ is $0$, which is equal to its stable gonality.

Suppose $G$ has at least one edge. By Lemma~\ref{lemma:morphismtodecomposition}, we obtain a tree-decomposition of $G'$ of width $k$ in $O(k^2|V(G')|)$ time. Standard treewidth techniques allow us to transform a tree decomposition of a refinement of $G$ to a tree decomposition of $G$ of the same or smaller width. Added leaves can just be removed from all bags where they occur. For each subdivided edge $\{v,w\}$, replace each occurrence of a vertex representing a subdivision of this edge by $v$ in each bag.
\qed
\end{proof}

\subsection*{Acknowledgements}
We thank Gunther Cornelissen, Bart Jansen, Erik Jan van Leeuwen, Marieke van der Wegen, and Tom van der Zanden for helpful discussions.

\bibliographystyle{abbrv}
\bibliography{twgon}

\end{document}